\newtheorem{theorem}{Theorem}[section]
\newtheorem{lemma}[theorem]{Lemma}
\newtheorem{corollary}[theorem]{Corollary}
\theoremstyle{definition}
\newtheorem{definition}[theorem]{Definition}
\theoremstyle{remark}
\numberwithin{equation}{section}
\newcommand{\ba}{\begin{array}}
\newcommand{\ea}{\end{array}}
\newcommand{\la}{\lambda}
\newcommand{\ds}{\displaystyle}
\begin{document}
\date{}
\title{ \bf\large{Impact of resource distributions on the competition of  species in stream environment}}

\author{\small Tung D. Nguyen\footnote{Department of Mathematics, Texas A\&M University, College Station, TX 77843, USA}, Yixiang Wu\footnote{Corresponding author. Department of Mathematical Sciences, Middle Tennessee State University, Murfreesboro, Tennessee 37132, USA},  Tingting Tang\footnote{Department of Mathematics and Statistics, San Diego State University, San Diego, CA 92182, USA}, Amy Veprauskas\footnote{Department of Mathematics, University of Louisiana at Lafayette, Lafayette, LA 70501, USA}, \\ \small
Ying Zhou\footnote{Department of Mathematics, Lafayette College, Easton, PA 18042, USA},  Behzad Djafari Rouhani\footnote{Department of Mathematical Sciences, University of Texas at El Paso, El Paso, TX 79968, USA},  and Zhisheng Shuai\footnote{Department of Mathematics, University of Central Florida, Orlando, Florida 32816, USA}}

\maketitle

\begin{abstract}
Our earlier work in \cite{nguyen2022population} shows that concentrating resources on the upstream end tends to maximize the total biomass in a metapopulation model for a stream species.
In this paper, we continue our research direction by further considering a Lotka-Volterra competition patch model for two stream species. We show that the species whose resource allocations maximize the total biomass has the competitive advantage. 
\\[2mm]
   
\noindent {\bf Keywords}: Lotka-Volterra competition  model;  patch model; stream species; global dynamics;  resource distributions.\\[2mm]
\noindent {\bf MSC 2020}:   92D25, 92D40, 34C12, 34D23, 37C65
\end{abstract}

\section{Introduction}
The impact of  resource distributions on the persistence of a single species has been studied extensively  (e.g., \cite{cantrell2004spatial, cantrell1989diffusive, berestycki2005analysis,lou2006effects,cantrell1998effects}).  Lou \cite{lou2006effects} observed that   if the dynamics of the species is modeled by a reaction-diffusion model with logistic type nonlinearity, the total biomass of the species may exceed the carrying capacity. Later, the ratio of total biomass over carrying capacity was studied both theoretically and experimentally  \cite{bai2016optimization, deangelis2016dispersal,  inoue2021unboundedness, liang2012dependence, zhang2015effects, zhang2017carrying}.  
 The distributions to maximize the growth rate \cite{cantrell1989diffusive, cantrell1991diffusive, lamboley2016properties,lou2006minimization} and the total biomass \cite{ding2010optimal,  mazari2020optimal, mazari2022optimisation,mazari2021fragmentation, nagahara2018maximization} have been to shown to be of  bang-bang type. Similar maximizing  total biomass problems have been studied for patch models with logistic growth and  random movement \cite{liang2021optimal,Nagahara2021}.

Resource allocations may also affect the interactions of multiple species \cite{he2013effects,  he2016globalII, he2016globalI, he2017global, lin2014global, gourley2005two, wei2021coexistence}. In a two-patch Lotka-Volterra competition model, it was shown that the species with a more heterogeneous distribution of resources will never lose the competition \cite{lin2014global,gourley2005two}. Similarly, in a two species reaction-diffusion competition model, it was shown that a species with a heterogeneous spatial distribution of resources will outcompete a species with homogeneously distributed resources \cite{he2016globalII}. Meanwhile, when both species have a heterogeneous distribution of resources, the slower dispersing species wins competition \cite{he2017global}.
For reaction-diffusion models it has also been shown that, when the carrying capacity is proportional to the growth rate, there is no optimal form of resource allocation \cite{hutson2003evolution}. Meanwhile, Mazari showed that the species whose resource allocation results in the largest total biomass  wins the competition when the diffusion rate is  large in a competition model with multiple species \cite{mazari2019trait}.

Our study is motivated by a series of recent works on metapopulation models in stream environments, where the individuals have both random movement and directed drift. In \cite{nguyen2022population}, we considered the impact of the distribution of resources on the persistence of a single stream species. In particular, we showed that to maximize the total biomass one should concentrate the resources on the upstream ends while to maximize the growth rate of the population one may need to concentrate the resources on the downstream ends. In \cite{Jiang2020BMB, Jiang-Lam-Lou2021}, the authors studied the joint impact of the diffusion rate, advection rate and network topology on the competition outcome of two stream species in a three patch Lotka-Volterra competition model whose patches are constructed as shown in Figure \ref{river}. When there are $n$ patches aligned along a line, the results in  \cite{chen2022,chen2022invasion,chen2021} showed that the magnitude of movement rates, the convexity of the drift rates and the population loss rate at the downstream end can significantly alter the competition outcome of two species. For works on reaction-diffusion-advection models for stream species, we refer the interested readers to  \cite{lam2016emergence,Lou14, Lou15, Lutscher06,pachepsky2005persistence,speirs2001population,vasilyeva2012flow,wang2019persistence, yan2022competition,MR4222368}  and the references therein.

 Motivated by the aforementioned studies, we consider the following two-species $n$-patch Lotka-Volterra competition model in stream environment: 
\begin{equation}\label{patch}
\begin{cases}
\ds\frac{du_i}{dt}=\ds\sum_{j=1}^{n} (d D_{ij}+qQ_{ij})u_j+r_iu_i\left(1-\frac{u_i+v_i}{k}\right), &i=1,\dots,n,\;\;t>0,\\
\ds\frac{dv_i}{dt}=\ds\sum_{j=1}^{n} (d D_{ij}+qQ_{ij})v_j+s_iv_i\left(1-\frac{u_i+v_i}{k}\right), &i=1,\dots,n,\;\;t>0,\\
\bm u(0)=\bm u_0\ge(\not\equiv)  \bm 0, \; \bm v(0)=\bm v_0\ge(\not\equiv) \bm 0.
\end{cases}
\end{equation}
Here the vectors $\bm u=(u_1, \dots, u_n)$ and $\bm v=(v_1, \dots, v_n)$ denote the densities of two competing stream species at each patch location. 
The nonnegative vectors $\bm r=(r_1, \dots, r_n)$ and $\bm s=(s_1, \dots, s_n)$ are the growth rates of $\bm u$ and $\bm v$, respectively, and the carrying capacity is assumed to be a positive constant $k$ for all the patches.  Two  $n\times n$ matrices $D=(D_{ij})$ and $Q=(Q_{ij})$ 
 represent the random movement pattern and   directed drift pattern of individuals respectively. For the three network configurations with $n=3$ in Figure \ref{river} (see \cite{Jiang2020BMB, Jiang-Lam-Lou2021}), the corresponding matrices  $D$ and $Q$ are as follows:
 \begin{itemize}
 \item Case 1:
\begin{equation}\label{DQ}
    D=\begin{bmatrix}
  -2 & 1 &1 \\
  1&-1& 0\\
  1&0 & -1 \\
 \end{bmatrix},\;\;Q=\begin{bmatrix}
  -2 & 0 &0 \\
  1&0& 0\\
  1&0 & 0\\
 \end{bmatrix};
\end{equation}

 \item Case 2:
\begin{equation}\label{DQ1}
    D=\begin{bmatrix}
  -1 & 0 &1 \\
  0&-1& 1\\
  1&1 & -2 \\
 \end{bmatrix},\;\;Q=\begin{bmatrix}
  -1 & 0 &0 \\
  0&-1& 0\\
  1&1 & 0\\
 \end{bmatrix};
\end{equation}

\item Case 3:
\begin{equation}\label{DQ2}
    D=\begin{bmatrix}
  -1 & 1 &0 \\
  1&-2& 1\\
  0&1 & -1 \\
 \end{bmatrix},\;\;Q=\begin{bmatrix}
  -1 & 0 &0 \\
  1&-1& 0\\
  0&1 & 0\\
 \end{bmatrix}.
\end{equation}
\end{itemize}

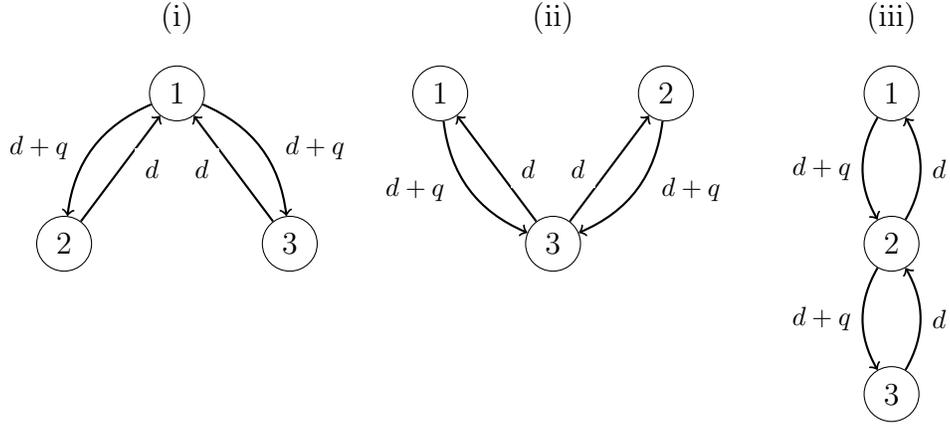
\begin{figure}[htbp]
\centering
\begin{tikzpicture}
\begin{scope}[every node/.style={draw}, node distance= 1.5 cm]
   
    \node[draw=white] (iii) at (13.5,1) {(iii)};
    \node[draw=white] (i) at (4,1) {(i)};
      \node[draw=white] (ii) at (9,1) {(ii)};
    
    \node[circle] (1) at (13.5,0) {$1$};
    \node[circle] (2) at (13.5,-2) {$2$};
    \node[circle] (3) at (13.5,-4) {$3$};
    
       \node[circle] (4) at (4,0) {$1$};
           \node[circle] (5) at (2.5,-2) {$2$};
    \node[circle] (6) at (5.5,-2) {$3$};

           \node[circle] (7) at (7.5,0) {$1$};
           \node[circle] (8) at (10.5,0) {$2$};
    \node[circle] (9) at (9,-2) {$3$};
\end{scope}
\begin{scope}[every node/.style={fill=white},
              every edge/.style={thick}]
    \draw[thick] [->](1) to [bend right] node[left=0.1] {{\footnotesize $d+q$}} (2);
    \draw[thick] [->](2) to [bend right] node[left=0.1] {{\footnotesize $d+q$}} (3);
    \draw[thick] [<-](1) to [bend left] node[right=0.1] {{\footnotesize $d$}} (2);
    \draw[thick] [<-](2) to [bend left] node[right=0.1] {{\footnotesize $d$}} (3);

        \draw[thick] [->](4) to [bend right] node[left=5] {{\footnotesize $d+q$}} (5);
        \draw[thick] [->](5) to node[right=5] {{\footnotesize $d$}} (4);
        \draw[thick] [->](4) to [bend left] node[right=5] {{\footnotesize $d+q$}} (6);
        \draw[thick] [->](6) to  node[left=5] {{\footnotesize $d$}} (4);
        
       \draw[thick] [->](7) to [bend right] node[left=5] {{\footnotesize $d+q$}} (9);
        \draw[thick] [->](9) to node[right=5] {{\footnotesize $d$}} (7);  
            \draw[thick] [->](8) to [bend left] node[right=5] {{\footnotesize $d+q$}} (9);
        \draw[thick] [->](9) to node[left=5] {{\footnotesize $d$}} (8);
\end{scope}
\end{tikzpicture}
\caption{A stream with three patches, where $d$ is the random movement rate and $q$ is the directed drift rate.  (i) Patch    1 is   the  upstream end and patches 2 and 3 are  the downstream ends. (ii) Patches  1 and 2  are  the  upstream ends and patch 3 is  the downstream end. (iii) Patch 1 is  the  upstream end and patch 3 is  the downstream end.}\label{river}
\end{figure}

Our objective is to determine how the distribution of resources for each species, as determined by $\boldsymbol{r}$ and $\boldsymbol{s}$, impact competitive outcomes for model \eqref{patch}.
We make the assumption that the resources are proportional to the growth rate in each patch and  the two species $\bm u$ and $\bm v$ have the same amount of resources, i.e. 
$$
\sum_{i=1}^n r_i=\sum_{i=1}^n s_i=r>0.
$$
We show that if the resources of species $\bm u$ are distributed  to maximize its biomass, i.e. all resources are distributed in the most upstream patches (see \cite{nguyen2022population}), while the resources of species $\bm v$ are not, then species $\bm u$  always wins the competition.  For example, for configuration (i) in Figure \ref{river}, such a distribution of species $\bm u$ corresponds to $\bm r=(r, 0, 0)$  while the distribution of species $\bm v$ satisfies $\bm s\neq (r,0,0)$.  


Our paper is organized as follows. In Section 2, we present some preliminary results which follow from existing theory. In Section 3, we consider the three-node stream networks shown in Figure \ref{river}. For each of these configurations we show that a species whose resources are distributed so that their total biomass is maximized in the absence of competition is able to out-compete a species whose resources are not optimally distributed. In Section 4, we extend these results to apply to $n$-patch stream networks.


\section{Preliminaries}

Let  $\bm w=(w_1, \dots, w_n)$ be a real vector. We write  $\bm w\gg \bm 0$ if $w_i>0$ for all $i=1, \dots, n$, and $\bm w>\bm 0$ if $\bm w\ge \bm 0$ but $\bm w\neq \bm 0$. Let $A=(a_{ij})_{n\times n}$ be a real square matrix. Let $\sigma(A)$ be the set of all eigenvalues of $A$, and $s(A)$ be the  {\it spectral bound} of $A$, i.e.
$$
s(A)=\max\{{\rm Re} \lambda: \lambda\in\sigma(A)\}.
$$
The matrix $A$ is called \emph{irreducible} if it cannot be placed into block upper triangular form by simultaneous row and column permutations and \emph{essentially nonnegative} if $a_{ij}\ge 0$ for all $1\le i, j\le n$ such that $i\neq j$. By the Perron-Frobenius Theorem, if $A$ is irreducible and essentially nonnegative, then $\lambda_1=s(A)$ is a simple eigenvalue of $A$. Moreover, $\lambda_1$ (called the \emph{principal eigenvalue} of $A$) is associated with an eigenvector whose components are all positive, which is the unique eigenvalue associated with a nonnegative eigenvector.

Before studying the two species competition model, we revisit the  following single species  meta-population model:
\begin{equation}\label{pat-s}
\begin{cases}
\ds\frac{du_i}{dt}=\ds\sum_{j=1}^{n}l_{ij} u_j+r_iu_i\left(1-\frac{u_i}{k_i}\right),&i=1,\dots,n,\;\;t>0,\\
\bm u(0)=\bm u_0>\bm0.
\end{cases}
\end{equation}
Here, $\bm u=(u_1, \dots, u_n)$ is the density of a meta-population living in $n$-patches; $\bm k=(k_1, \dots, k_n)$ is the carrying capacity; $\bm r=(r_1, \dots, r_n)$ is the growth rate.   
The coefficients $\ell_{ij}\ge 0$ denote the movement rate of the individuals from patch $j$ to patch $i$ for $1\le i,j\le n$ and $i\not= j$; 
$l_{ii}=-\sum_{j\neq i} l_{ji}$ is the total movement rate out from patch $i$. Then the $n\times n$ \textit{connection matrix} $L$ is of the form 
 \begin{equation}\label{eq-L}
L:=\begin{pmatrix}
-\sum_j \ell_{j1} & \ell_{12} & \cdots & \ell_{1n}\\
\ell_{21} & -\sum_j \ell_{j2} & \cdots & \ell_{2n}\\
\vdots & \vdots & \ddots & \vdots\\
\ell_{n1} & \ell_{n2} & \cdots & -\sum_j \ell_{jn}
\end{pmatrix}.
\end{equation}
 It is easy to see that $(1,1, \dots, 1)$ is a left eigenvector of $L$ corresponding to eigenvalue $0$. We always assume that  $L$ is irreducible. By the Perron-Frobenius Theorem, 0 is the principal eigenvalue of $L$.

We can associate $L$ with  a weighted, directed graph (digraph) $\mathcal{G}$ consisting of  $n$ nodes (each node $i$ in $\mathcal{G}$ corresponds to patch $i$). In $\mathcal{G}$, there is a directed edge (arc) from node $j$ to node $i$ if and only if $\ell_{ij}>0$. The couple $(\mathcal{G}, L)$ is called the \emph{movement network} associated with \eqref{pat-s}.

The global dynamics of \eqref{pat-s} are well-known:
 
\begin{lemma}[\cite{cosner1996variability,li2010global,Lu1993,takeuchi1996global}]\label{DS-single}
Suppose that $L$ is essentially nonnegative and irreducible matrix that is defined in \eqref{eq-L}. If $\bm r>\bm 0$ and  $\bm k\gg\bm 0$, then model \eqref{pat-s} has a unique positive equilibrium, which is globally asymptotically stable.
\end{lemma}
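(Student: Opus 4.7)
The plan is a standard monotone-dynamical-systems argument. Let $F:\R^n_{\ge 0}\to\R^n$ denote the right-hand side of \eqref{pat-s}.

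First, I would record the basic structural properties. Essential nonnegativity of $L$ gives $\partial F_i/\partial u_j \ge 0$ for $i\neq j$, so \eqref{pat-s} is cooperative and generates a monotone semiflow, and irreducibility of $L$ upgrades the flow on the open cone to strongly monotone. Because $F_i|_{u_i=0}=\sum_{j\neq i}\ell_{ij}u_j\ge 0$, the nonnegative cone is positively invariant. Since $(1,\dots,1)L=0$, summing the $n$ equations gives $\ds\frac{d}{dt}\sum_i u_i = \sum_i r_i u_i(1-u_i/k_i)$, whose right-hand side is negative whenever $\max_i u_i$ exceeds $\max_i k_i$, so all trajectories are globally bounded.

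Second, I would rule out collapse to $\bm 0$. The Jacobian at the origin is $L+\mathrm{diag}(\bm r)$, which is again essentially nonnegative and irreducible; strict monotonicity of the Perron root on this class, together with $\bm r>\bm 0$, yields $s(L+\mathrm{diag}(\bm r))>s(L)=0$. Let $\bm\phi\gg\bm 0$ be an associated right Perron eigenvector; then $F(\ep\bm\phi)=\ep\, s(L+\mathrm{diag}(\bm r))\,\bm\phi-O(\ep^{2})\gg\bm 0$ for small $\ep>0$, while $F(M\bm 1)\ll\bm 0$ for $M$ larger than $\max_i k_i$. Monotone iteration from $\ep\bm\phi$ and $M\bm 1$ produces a minimal and a maximal positive equilibrium $\bm u_*\le \bm u^*$ of \eqref{pat-s}.

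The main obstacle is showing $\bm u_*=\bm u^*$ and that every positive trajectory converges there. My plan is the standard concavity sweep: set $\la^{*}:=\min_i u^{*}_i/u_{*,i}\ge 1$; if $\la^{*}>1$, then at an index $i_{0}$ realizing the minimum, $u^{*}_{i_{0}}=\la^{*} u_{*,i_{0}}$ while $u^{*}_{j}\ge \la^{*} u_{*,j}$ for all $j$, and strict concavity of $u\mapsto r_{i}u(1-u/k_{i})$ forces $0=F_{i_{0}}(\bm u^{*})>\la^{*}F_{i_{0}}(\bm u_{*})=0$, provided $r_{i_{0}}>0$. The delicate point is when $r_{i_{0}}=0$: the logistic term is then flat, and the strict inequality must instead be extracted from the movement term by walking $i_{0}$ along a directed path in the graph $\mathcal{G}$ until reaching a node with $r_{i}>0$; irreducibility of $L$ guarantees that such a path exists and that $\bm u_{*}\ll \bm u^{*}$ if they differ anywhere. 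Once uniqueness is in hand, global attractivity is immediate from the monotone squeeze: for any $\bm u_{0}>\bm 0$, strong monotonicity yields $\bm u(t_{0})\gg\bm 0$ for some $t_{0}>0$, and then $\bm u(t)$ is trapped between the monotone trajectories emanating from $\ep\bm\phi$ and $M\bm 1$, both of which converge to the unique positive equilibrium. An alternative route that bypasses the case analysis is the graph-theoretic Lyapunov function of Li and Shuai \cite{li2010global}, which delivers uniqueness and global stability in a single step.
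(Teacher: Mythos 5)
The paper does not actually prove this lemma --- it quotes it from \cite{cosner1996variability,li2010global,Lu1993,takeuchi1996global} --- so your proposal is judged on its own terms. The monotone-systems skeleton you chose is the standard one and much of it is fine (cooperativity and strong monotonicity from essential nonnegativity plus irreducibility, instability of $\bm 0$ via $s(L+\mathrm{diag}(\bm r))>s(L)=0$, the final squeeze). But two load-bearing steps fail as written. First, the supersolution: the claim $F(M\bm 1)\ll\bm 0$ (and likewise the claim that $\frac{d}{dt}\sum_i u_i<0$ once $\max_i u_i>\max_i k_i$) is false here, because only the \emph{column} sums of $L$ in \eqref{eq-L} vanish, not the row sums. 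Concretely, for configuration (i) with $D,Q$ as in \eqref{DQ} and $\bm r=(r,0,0)$ --- exactly the case the paper needs this lemma for --- the second row sum of $dD+qQ$ equals $q>0$, so $F_2(M\bm 1)=Mq>0$ for every $M$, and $\frac{d}{dt}\sum_i u_i=r u_1(1-u_1/k_1)$ can be positive while $u_2$ is arbitrarily large. The repair is to shape the supersolution by the positive right null vector $\bm b$ of $L$ ($L\bm b=0$, Perron--Frobenius): $M\bm b$ is a supersolution for all large $M$, which gives both dissipativity (invariant boxes $[\bm 0,M\bm b]$) and the decreasing trajectory for the iteration; this is exactly the device the paper itself uses in the proof of Lemma \ref{lemma_len}.

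Second, the uniqueness sweep is set up with the wrong extremal ratio. With $\la^{*}=\min_i u^{*}_i/u_{*,i}$ and contact index $i_0$, the coupling terms give $F_{i_0}(\bm u^{*})\ge F_{i_0}(\la^{*}\bm u_{*})$, while subhomogeneity gives $F_{i_0}(\la^{*}\bm u_{*})<\la^{*}F_{i_0}(\bm u_{*})$ when $\la^{*}>1$ and $r_{i_0}>0$; these inequalities point the same way and only yield $0\ge F_{i_0}(\la^{*}\bm u_{*})$ with $F_{i_0}(\la^{*}\bm u_{*})<0$, which is consistent --- your chain $0=F_{i_0}(\bm u^{*})>\la^{*}F_{i_0}(\bm u_{*})=0$ does not follow. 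Moreover, even if $\la^{*}>1$ were excluded, $\min_i u^{*}_i/u_{*,i}=1$ alone would not force $\bm u^{*}=\bm u_{*}$. The correct sweep uses $\mu:=\max_i u^{*}_i/u_{*,i}$, so that $\bm u^{*}\le\mu\bm u_{*}$ with contact at $i_0$ and $\mu\bm u_{*}$ is a supersolution, strictly so at every node with $r_i>0$; then $0=F_{i_0}(\bm u^{*})\le F_{i_0}(\mu\bm u_{*})\le\mu F_{i_0}(\bm u_{*})=0$ is a strict contradiction if $r_{i_0}>0$, and if $r_{i_0}=0$ the forced equalities propagate along the strongly connected graph (as you anticipated for the ``delicate point'') until a node with $r_i>0$ is reached, since $\bm r>\bm 0$. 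With these two repairs --- $M\bm b$ in place of $M\bm 1$, and the maximal ratio in place of the minimal one --- your outline does deliver the lemma; the Li--Shuai Lyapunov route you mention is a genuine alternative for the global stability part, but it still requires existence and uniqueness of the positive equilibrium to be established first.
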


By Lemma \ref{DS-single}, model \eqref{patch} has two semitrivial equilibria $E_1:=(\bm u^*, \bm 0)$ and $E_2:=(\bm 0, \bm v^*)$.  By the well-known monotone dynamical system theory 
\cite{hess,hsu1996competitive,lam2016remark,smith2008monotone}, the global dynamics of \eqref{patch} is closely related to the local properties of its equilibria. 
Denote $X=\mathbb{R}_+^n\times\mathbb{R}_+^n$. Let $\le_K$ be the order in $X$ induced by the cone $K=\mathbb{R}_+^n\times\{-\mathbb{R}_+^n\}$. Then if $\bm x=(\bar {\bm u}, \bar {\bm v}), \bm y=(\tilde  {\bm u}, \tilde {\bm v})\in X$, we write $\bm x\le_K \bm y$ if $\bar  {\bm u}\le \tilde  {\bm u}$ and $\bar {\bm v}\ge \tilde  {\bm v}$; $\bm x<_K \bm y$ if $\bm x\le_K \bm y$ and $\bm x\neq \bm y$. 
We utilize the following result later  (this result is proved in \cite{smith2008monotone} for the case $n=2$ first but it holds for any $n\ge 2$  \cite[Page 70]{smith2008monotone}):
\begin{lemma}[{\cite[Theorem 4.4.2]{smith2008monotone}}]\label{M1}
Suppose that  $E_1$ is linearly unstable. Then one of the following holds:
\begin{itemize}
    \item[{\rm (i)}] $E_2$ attracts all solutions with initial data $(\bm u_0,\bm v_0)\in X$ satisfying $\bm v_0>0$. 
    In this case, $E_2$ is linearly stable or  neutrally stable; 
    \item[{\rm (ii)}] There exists a positive equilibrium $E$ satisfying $E_2\ll_K E\ll_K E_1$ such that $E$ attracts all solutions with initial data $(\bm u_0, \bm v_0)\in X$ satisfying $E\le_K (\bm u_0, \bm v_0)<_K E_1$. 
\end{itemize}
\end{lemma}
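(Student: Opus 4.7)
The plan is to invoke the theory of strongly monotone competitive semiflows. First I would verify that \eqref{patch} generates a strongly order-preserving semiflow $\Phi_t$ on $X$ with respect to the competitive cone $K=\mathbb{R}^n_+\times(-\mathbb{R}^n_+)$. The Jacobian has the essentially nonnegative, irreducible matrix $dD+qQ$ on each diagonal block, while the two off-diagonal blocks are nonpositive diagonal matrices coming from the $-(u_i+v_i)/k$ term; viewed in the $K$-order all these signs become nonnegative, so the Kamke condition holds, and irreducibility of $dD+qQ$ together with the nonzero cross-diagonal terms gives strong monotonicity on $\mathrm{int}\,X$. Consequently the order interval $[E_2,E_1]_K$ is positively invariant, compact, and order-convex, and every $K$-monotone orbit inside converges to an equilibrium by the monotone convergence principle.

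Second, I would exploit instability of $E_1$ to produce a strictly $K$-decreasing orbit emanating from a point just below $E_1$. The Jacobian at $E_1$ is block upper-triangular with diagonal blocks $A=(dD+qQ)+\mathrm{diag}(r_i(1-2u^*_i/k))$ and $M=(dD+qQ)+\mathrm{diag}(s_i(1-u^*_i/k))$; since $\bm u^*$ is globally attracting in the $\bm v\equiv\bm 0$ subsystem, $s(A)\le 0$, so instability forces $s(M)>0$. Let $\bm\phi\gg\bm 0$ be the Perron eigenvector of $M$; the eigenvector of the full Jacobian for $s(M)$ then has the form $(\bm\xi,\bm\phi)$ with $\bm\xi=-(s(M)I-A)^{-1}\mathrm{diag}(r_iu^*_i/k)\bm\phi\le\bm 0$, using that $(s(M)I-A)^{-1}$ is componentwise nonnegative (since $A$ is essentially nonnegative and $s(M)>s(A)$). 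Hence $(\bm\xi,\bm\phi)$ points into $-K$, so $\bm p_\ep:=E_1+\ep(\bm\xi,\bm\phi)<_K E_1$ for small $\ep>0$, and $\Phi_t(\bm p_\ep)<_K\bm p_\ep$ for small $t>0$. Strong monotonicity propagates this, making $t\mapsto\Phi_t(\bm p_\ep)$ strictly $K$-decreasing, so it converges to some equilibrium $E^*$ with $E_2\le_K E^*<_K E_1$.

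Third, I would argue the dichotomy. If $E^*=E_2$, no positive equilibrium can lie in $(E_2,E_1)_K$: any such $\tilde E$ would satisfy $\tilde E\le_K\bm p_\ep$ for $\ep$ small, and monotonicity would trap $\Phi_t(\bm p_\ep)\ge_K\tilde E$ for all $t$, contradicting its limit being $E_2$. A symmetric construction at $E_2$ then produces a $K$-increasing orbit $\Phi_t(\bm q_\delta)$ whose limit cannot be $E_1$ (unstable) nor any positive equilibrium (by the previous sentence), hence must be $E_2$. Sandwiching any orbit starting in the strict interior of $[E_2,E_1]_K$ between $\Phi_t(\bm q_\delta)$ and $\Phi_t(\bm p_\ep)$ gives convergence to $E_2$, and a standard absorbing-set/comparison argument extends this to all $(\bm u_0,\bm v_0)\in X$ with $\bm v_0>\bm 0$, proving case (i); neutral/linear stability of $E_2$ then follows because an interior unstable eigenvalue at $E_2$ would by the symmetric construction yield a positive equilibrium, contradicting (i). Otherwise a positive equilibrium exists in $(E_2,E_1)_K$, and the $K$-supremum of all such (itself an equilibrium by a standard monotone-system compactness argument) plays the role of $E$ in case (ii).

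The main obstacle is the third step: ruling out periodic orbits or continua of equilibria inside $[E_2,E_1]_K$ so that the attractor in (ii) is genuinely a single point, and carefully extending case (i) to all $(\bm u_0,\bm v_0)\in X$ with merely $\bm v_0>\bm 0$ via comparison with a large translate of $E_1$. Both rest on strong monotonicity plus Hirsch-type convergence results for competitive semiflows; by contrast, the eigenvector and sign computations in the first two steps are routine once the block structure and Perron--Frobenius are in place.
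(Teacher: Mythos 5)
There is no in-paper proof to compare against here: the paper quotes this dichotomy verbatim from Smith's monograph (Theorem 4.4.2, with the remark on p.~70 that it holds for all $n\ge 2$), so the ``paper's proof'' is the citation itself. Your sketch is, in substance, the standard proof of that cited result for type-$K$ competitive systems — Kamke condition for $K=\mathbb{R}^n_+\times(-\mathbb{R}^n_+)$, the unstable Perron direction at $E_1$ generating a strictly $K$-decreasing orbit whose limit is an equilibrium, and the split into cases according to whether a positive equilibrium sits in the order interval — and I see no wrong step in it. Two points to make explicit if you write it out: since $A$ is irreducible and essentially nonnegative and $s(M)>s(A)$, the resolvent $(s(M)I-A)^{-1}$ is strictly positive, so in fact $\bm\xi\ll\bm 0$ and $(\bm\xi,\bm\phi)$ lies in the interior of $-K$; this strictness is what lets you absorb the $O(\ep^2)$ error and conclude that the vector field at $\bm p_\ep$ points into $-K$, hence that $t\mapsto\Phi_t(\bm p_\ep)$ is $K$-decreasing. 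And for case (ii) the attraction of the order segment $[E,E_1)_K$ (and the fact that the limit of the decreasing orbit does not depend on $\ep$ and is the $K$-maximal equilibrium below $E_1$) still needs the squeezing argument you allude to: any such initial point is, after strong monotonicity pushes it strictly below $E_1$, dominated by some $\bm p_\ep$ and dominates $E$, so its omega-limit is pinched between them; these are exactly the details supplied in the cited source, so deferring to it is legitimate, but they are the crux of (ii) rather than routine.
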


By Lemma \ref{M1}, if  $E_2$ is linearly unstable and the model has no positive equilibrium, then $E_1$ is globally attractive.  
It is easy to see that the stability of  $E_1$   is determined by the sign of  $\la_1(\bm{s}, \bm u^*)$, which is the principal eigenvalue of the matrix $dD+qQ+\text{diag}(s_i(1-u^*_i/k))$: if $\la_1(\bm{s}, \bm u^*)<0$, $E_1$ is locally asymptotically stable; if  $\la_1(\bm{s}, \bm u^*)>0$, $E_1$ is unstable;  if  $\la_1(\bm{s}, \bm u^*)=0$, $E_1$ is linearly neutrally stable.  Similarly, the local stability of $E_2$ is determined by the sign of $\la_1(\bm{r}, \bm v^*)$, which is the principal eigenvalue of the matrix $dD+qQ+\text{diag}(r_i(1-v^*_i/k))$.

\section{Stream networks of three nodes}\label{sec3nodes}
In this section, we consider model \eqref{patch} for the three-node stream networks shown in Figure \ref{river}. Here we provide detailed analysis for configuration (i), with analogous results for configurations (ii) and (iii) provided in the appendix. Our results state that a species whose resources are concentrated on the upstream end will have the competitive advantage.  In particular, for configuration (i) we show that if $\bm r=(r, 0, 0)$ and $\bm s\neq \bm r$, then species $\bm u$ always wins the competition. 

We first prove the following lemma which is used to show that model \eqref{patch} does not have a positive equilibrium.

\begin{lemma}\label{lemma_sign}
    Suppose that $D$ and $Q$ are given by \eqref{DQ}. Let $\bm r=(r, 0, 0)$ and $\bm s=(s_1, s_2, s_3)>\bm 0$ with $\bm s\neq \bm r$ and $\sum_{i=1}^3 s_i=r>0$. If $(\bm u, \bm v)$ is a positive equilibrium of \eqref{patch}, then $u_1+v_1<k$, $u_2+v_2>k$ and $u_3+v_3>k$.
\end{lemma}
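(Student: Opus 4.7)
The plan is to extract structural constraints on a putative positive equilibrium $(\bm u, \bm v)$ of \eqref{patch} by exploiting two key facts: $r_2 = r_3 = 0$, and every column of $dD + qQ$ sums to zero. Throughout I write $w_i := u_i + v_i$.

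First, since $r_2 = r_3 = 0$, the reaction terms of the $u$-equations at patches $2$ and $3$ vanish, so the equilibrium conditions there reduce to $(d+q) u_1 = d u_i$, giving $u_2 = u_3 = \tfrac{d+q}{d}\, u_1$. Summing the three $u$-equations then collapses the transport contributions by column-sum-zero, leaving $r u_1(1 - w_1/k) = 0$; positivity of $r, u_1$ forces the upstream balance $w_1 = k$. Plugging $w_1 = k$ into the $v$-equation at patch $1$ kills its reaction term regardless of $s_1$ and yields $v_2 + v_3 = \tfrac{2(d+q)}{d}\, v_1$. Adding the derived formulas for $u_2, u_3$ then gives the averaged downstream identity $(u_2+v_2) + (u_3+v_3) = \tfrac{2(d+q)}{d}\, k$, which strictly exceeds $2k$ when $q > 0$, so the two downstream patches exceed $k$ on average.

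To upgrade this average to the two individual strict inequalities at patches $2$ and $3$, I would do casework on the support of $\bm s$, available because $\bm s \neq \bm r$ forces at least one of $s_2, s_3$ to be positive. If $s_2 = 0$, the $v$-equation at patch $2$ gives $v_2 = \tfrac{d+q}{d}\, v_1$, whence $u_2 + v_2 = \tfrac{d+q}{d}\, k > k$; the sum identity then propagates the same bound to patch $3$. The case $s_3 = 0$ is symmetric. In the remaining case $s_2, s_3 > 0$, I rewrite the $v$-equations at $i = 2, 3$ as
\[
s_i v_i (1 - w_i/k) = d v_i - (d+q) v_1,
\]
so the sign of $k - w_i$ coincides with the sign of $(d+q) v_1 - d v_i$, and I combine these sign identities with $v_2 + v_3 = \tfrac{2(d+q)}{d}\, v_1$ to isolate both strict inequalities.

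The main obstacle is precisely this last case, because the sum identity forces the residuals $v_i - \tfrac{d+q}{d} v_1$ at $i=2,3$ to be opposite in sign, so a naive reading would suggest $u_2 + v_2 > k$ and $u_3 + v_3 > k$ cannot both hold strictly; likewise, the claimed $u_1 + v_1 < k$ is in tension with the equality $w_1 = k$ just derived. This tension indicates that the lemma is deployed downstream as the algebraic engine of a contradiction argument, with the three constraints packaged here being used to rule out the existence of any positive equilibrium rather than to exhibit one.
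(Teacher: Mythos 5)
Your computations are mostly right, but the proposal does not prove the lemma, and its closing paragraph concedes as much. From the $u$-equations alone (using $r_2=r_3=0$ and the zero column sums) you correctly obtain $u_2=u_3=\tfrac{d+q}{d}u_1$ and $ru_1(1-w_1/k)=0$, hence $w_1=k$. But the statement you are asked to prove asserts $w_1=u_1+v_1<k$; deriving $w_1=k$ is not "tension" to be noted, it is a direct incompatibility that your argument never resolves. The only way your route can terminate is by pushing the equilibrium equations all the way to an outright contradiction (so that the lemma holds vacuously because no positive equilibrium exists), and you stop short of that: in the main case $s_2,s_3>0$ you only sketch that you "would combine" the sign identities with $v_2+v_3=\tfrac{2(d+q)}{d}v_1$, and the sign identity itself is stated backwards (from $s_iv_i(1-w_i/k)=dv_i-(d+q)v_1$ the sign of $k-w_i$ agrees with that of $dv_i-(d+q)v_1$, not of $(d+q)v_1-dv_i$). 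Had you completed it — e.g. writing $w_i=\tfrac{d+q}{d}k+a_i$ with $a_i=v_i-\tfrac{d+q}{d}v_1$, $a_2+a_3=0$, and noting the sign identity forces $a_2,a_3<0$ — you would have a genuine contradiction, i.e. a direct proof that \eqref{patch} has no positive equilibrium, which would subsume both this lemma and the paper's Lemma \ref{lemma_nonexistence} without the spectral comparison (Lemma \ref{lemma_sp}). As submitted, neither the claimed inequalities nor the contradiction is established, so the proof has a genuine gap.

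The paper's proof is structured precisely to avoid the dead end you ran into: it never uses the $u$-equations separately. It adds the $u$- and $v$-equations patchwise to get the three equations \eqref{eq:sum_u_v} for $w_i$ only, then argues by contradiction: assuming $w_2\le k$ forces $(d+q)w_1\le dw_2$, hence $w_1<k$; feeding this into the first equation gives $(d+q)w_1-dw_3>0$, hence $w_3<k$, which makes the third equation strictly positive — contradiction, so $w_2>k$, and symmetrically $w_3>k$. Then, since $\bm s\neq\bm r$ gives $s_2>0$ or $s_3>0$, the second and third equations yield $(d+q)w_1-dw_2>0$ and $(d+q)w_1-dw_3\ge 0$, and the first equation (with $r_1u_1>0$) forces $w_1<k$. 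If you want to keep your approach, recast it explicitly as a nonexistence proof; if you want to prove the lemma as stated, you must work only with the summed equations as the paper does.
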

\begin{proof}
Suppose that $(\bm u, \bm v)$ is a positive equilibrium of \eqref{patch}. Then $(\bm u, \bm v)$ satisfies 
\begin{equation}\label{pss}
\begin{cases}
\ds 0=\ds\sum_{j=1}^{3} (d D_{ij}+qQ_{ij})u_j+r_iu_i\left(1-\frac{u_i+v_i}{k}\right), &i=1,2,3,\\
\ds 0=\ds\sum_{j=1}^{3} (d D_{ij}+qQ_{ij})v_j+s_iv_i\left(1-\frac{u_i+v_i}{k}\right), &i=1,2,3.
\end{cases}
\end{equation}
Let $w_i:=u_i+v_i$ for $i=1, 2, 3$. Adding each corresponding pair of equations above, we have
\begin{eqnarray}
&&0=-(d+q)w_1+dw_2-(d+q)w_1+dw_3+(r_1u_1+s_1v_1)\left(1-\frac{w_1}{k} \right), \nonumber\\
&&0=(d+q)w_1-dw_2+(r_2u_2+s_2v_2)\left(1-\frac{w_2}{k} \right), \label{eq:sum_u_v}\\
&&0=(d+q)w_1-dw_3+(r_3u_3+s_3v_3)\left(1-\frac{w_3}{k} \right). \nonumber
\end{eqnarray}
We prove that $w_2=u_2+v_2 >k$ by contradiction.  Assume to the contrary that $w_2\le k$. Then by the second equation of \eqref{eq:sum_u_v}, we have  
\begin{equation*}
(d+q)w_1-dw_2 =  -(r_2u_2+s_2v_2)\left(1-\frac{w_2}{k} \right)\le 0.
\end{equation*} 
This implies
$$
w_1\le \frac{d}{d+q} w_2\le \frac{d}{d+q}k<k.
$$
By $r_1=r>0$ and the first equation of \eqref{eq:sum_u_v}, we have 
\[
(d+q)w_1 - dw_3 = -(d+q)w_1+dw_2 + (r_1u_1+s_1v_1)\left(1-\frac{w_1}{k} \right).
\]
Since we have shown that $(d+q)w_1 - dw_2 \leq 0$ and $w_1 <k$, we must have $(d+q)w_1-dw_3>0$. Thus
\[
w_3 < \frac{d+q}{d} w_1 \leq \frac{d+q}{d}\frac{d}{d+q}k = k.
\]
This further implies
\[
(d+q)w_1 - dw_3 +(r_3u_3+s_3v_3)\left(1-\frac{w_3}{k} \right) > 0,
\]
which contradicts the third equation of \eqref{eq:sum_u_v}. Therefore, we must have $u_2+v_2>k$. Similarly, we have $u_3+v_3>k$. 

Since $\bm s\neq\bm r$, either $s_2\neq 0$ or $s_3\neq 0$. Without loss of generality, say $s_2\neq 0$. Then by the second equation of \eqref{eq:sum_u_v} and $w_2>k$, $(d+q)w_1-dw_2>0$. By the third equation of \eqref{eq:sum_u_v} and $w_3>k$, we have $(d+q)w_1-dw_3\ge 0$. Finally by the first equation of \eqref{eq:sum_u_v} and $s_1>0$, we have $w_1=u_1+v_1<k$.
\end{proof}

Next, in Lemma \ref{lemma_nonexistence} we make use of the following well-known result (e.g., see \cite[ Corollary 2.1.5]{berman1994nonnegative})
to prove the non-existence of a positive equilibrium.

\begin{lemma}\label{lemma_sp}
Suppose that $P$ and $Q$ are $n\times n$ real-valued matrices, $P$ is essentially nonnegative, $Q$ is nonnegative and nonzero, and
$P+Q$ is irreducible. Then, $s(P+Q)>s(P)$.
\end{lemma}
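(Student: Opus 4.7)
The plan is to reduce the essentially nonnegative setting to the nonnegative one via a diagonal shift and then invoke the strict monotonicity of the Perron root.

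First, I would choose a scalar $\alpha > \max_i |P_{ii}|$ so that $\tilde P := P + \alpha I$ is entrywise nonnegative. Since $Q \ge 0$, the matrix $\tilde P + Q = P + Q + \alpha I$ is also nonnegative and dominates $\tilde P$ entrywise. Because irreducibility is determined by the off-diagonal zero/nonzero pattern, the irreducibility of $P + Q$ transfers to $\tilde P + Q$, and the hypothesis $Q \ne 0$ ensures $\tilde P \ne \tilde P + Q$.

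Second, I would invoke two standard consequences of the Perron--Frobenius theorem: for any nonnegative matrix $M$, the spectral bound equals the spectral radius, $s(M) = \rho(M)$; and if $0 \le A \le B$ with $B$ irreducible and $A \ne B$, then $\rho(A) < \rho(B)$. Applying the second statement with $A = \tilde P$ and $B = \tilde P + Q$ gives $s(\tilde P) < s(\tilde P + Q)$. Using the translation identity $s(M + \alpha I) = s(M) + \alpha$ (valid because shifting by $\alpha I$ shifts every eigenvalue by $\alpha$), this rewrites as the desired $s(P) < s(P + Q)$.

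The main obstacle is the strict monotonicity fact $\rho(A) < \rho(B)$, which is the only ingredient not directly stated earlier in the paper. The standard argument is: let $\bm y^{\top} \gg \bm 0$ be the left Perron eigenvector of the irreducible matrix $B$ and $\bm u \ge \bm 0$, $\bm u \ne \bm 0$, a Perron eigenvector of $A$. From $A\bm u = \rho(A)\bm u$ and $A \le B$ one gets
\[
\rho(A)\,\bm y^{\top}\bm u \;=\; \bm y^{\top} A \bm u \;\le\; \bm y^{\top} B \bm u \;=\; \rho(B)\,\bm y^{\top}\bm u,
\]
and since $\bm y^{\top}\bm u > 0$ this yields $\rho(A) \le \rho(B)$. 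If equality held, then $\bm y^{\top}(B-A)\bm u = 0$ together with $\bm y \gg \bm 0$ and $(B-A)\bm u \ge \bm 0$ would force $(B-A)\bm u = \bm 0$, whence $B\bm u = \rho(B)\bm u$ makes $\bm u$ a Perron eigenvector of the irreducible $B$ and thus $\bm u \gg \bm 0$; then $A\bm u = B\bm u$ with $\bm u$ strictly positive and $B - A \ge \bm 0$ forces $A = B$, contradicting $A \ne B$. Once this strict monotonicity is in hand, the rest of the argument is the routine bookkeeping sketched above.
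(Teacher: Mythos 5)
Your proof is correct. Note that the paper does not actually prove this lemma at all: it simply cites it as a known result (Corollary 2.1.5 of Berman--Plemmons), so your contribution is a self-contained derivation of the cited fact, and the route you take is the standard textbook one. The two ingredients are exactly right: (a) the diagonal shift $P\mapsto P+\alpha I$ reduces the essentially nonnegative case to the nonnegative case, using that irreducibility depends only on the off-diagonal zero pattern and that $s(M+\alpha I)=s(M)+\alpha$, together with $s(M)=\rho(M)$ for $M\ge 0$; and (b) the strict monotonicity $\rho(A)<\rho(B)$ for $0\le A\le B$, $A\neq B$, $B$ irreducible, which you prove correctly by pairing a left Perron vector $\bm y\gg\bm 0$ of $B$ with a nonnegative Perron vector $\bm u$ of $A$, and in the equality case upgrading $\bm u$ to a strictly positive vector via irreducibility of $B$ so that $(B-A)\bm u=\bm 0$ forces $A=B$. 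The only implicit ingredient worth acknowledging is that a general (possibly reducible) nonnegative matrix $A$ has $\rho(A)$ as an eigenvalue with a nonnegative eigenvector; this is the general form of the Perron--Frobenius theorem and is standard, so the argument is complete as written.
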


\begin{lemma}\label{lemma_nonexistence}
    Suppose that $D$ and $Q$ are given by \eqref{DQ}. Let $\bm r=(r, 0, 0)$ and $\bm s=(s_1, s_2, s_3)>\bm 0$ with $\bm s\neq \bm r$ and $\sum_{i=1}^3 s_i=r>0$. Then model \eqref{patch} has no positive equilibrium. 
\end{lemma}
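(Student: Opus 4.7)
The plan is to work entirely with the $\bm u$-equation and extract a contradiction via Perron-Frobenius combined with the strict spectral-bound inequality of Lemma \ref{lemma_sp}. Suppose for contradiction that $(\bm u,\bm v)$ is a positive equilibrium of \eqref{patch}, and set $w_i:=u_i+v_i$. The first three lines of \eqref{pss} then take the form $M\bm u=\bm 0$, where $M:=dD+qQ+E$ and $E:=\operatorname{diag}\bigl(r_i(1-w_i/k)\bigr)$. Because $\bm r=(r,0,0)$, the perturbation $E$ has a single possibly nonzero entry, $r(1-w_1/k)$, located at position $(1,1)$.

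Next I would assemble the spectral bounds of the two matrices at play. A direct inspection of \eqref{DQ} shows every column of $D$ and of $Q$ sums to zero, so $(1,1,1)$ is a left eigenvector of $dD+qQ$ for $0$; applying Perron-Frobenius to its transpose (still essentially nonnegative and irreducible) pins this $0$ as the principal eigenvalue, giving $s(dD+qQ)=0$. Similarly, $\bm u\gg\bm 0$ is a positive right eigenvector of the essentially nonnegative irreducible matrix $M$ for the eigenvalue $0$, so the uniqueness clause of Perron-Frobenius yields $s(M)=0$.

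To close the argument, I would invoke Lemma \ref{lemma_sign} to conclude $w_1<k$, which together with $r>0$ forces $E$ to be nonnegative and nonzero. Lemma \ref{lemma_sp}, applied with $dD+qQ$ in place of $P$ and $E$ in place of $Q$, then gives $s(M)>s(dD+qQ)=0$, contradicting $s(M)=0$.

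The only genuine input is Lemma \ref{lemma_sign}, which is already in hand; everything else is a short Perron-Frobenius calculation, so I do not anticipate any substantive obstacle. Notably, the $\bm v$-equation should not be needed here: the asymmetry built into the hypothesis $\bm r=(r,0,0)$, which pins $\bm u$'s entire resource budget to the one patch where the equilibrium factor $1-w_i/k$ is positive, is exactly what makes the $\bm u$-equation alone sufficient to trigger Lemma \ref{lemma_sp}.
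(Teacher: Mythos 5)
Your argument is correct, but it closes the contradiction differently from the paper. You compare the matrix $M_1:=dD+qQ+\operatorname{diag}\bigl(r_i(1-w_i/k)\bigr)$ arising from the $\bm u$-equation against the bare movement matrix $dD+qQ$: since $\bm r=(r,0,0)$ and Lemma \ref{lemma_sign} gives $w_1<k$, the diagonal perturbation is nonnegative and nonzero, so Lemma \ref{lemma_sp} forces $s(M_1)>s(dD+qQ)=0$, contradicting $s(M_1)=0$ (Perron--Frobenius with the positive eigenvector $\bm u$). The paper instead compares $M_1$ with $M_2:=dD+qQ+\operatorname{diag}\bigl(s_i(1-w_i/k)\bigr)$ from the $\bm v$-equation, showing $\operatorname{diag}\bigl((r_i-s_i)(1-w_i/k)\bigr)\ge 0$ and nonzero, hence $s(M_1)>s(M_2)$ while both spectral bounds equal $0$. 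Your route is a bit leaner: it needs only the $\bm u$-equation, only the conclusion $w_1<k$ from Lemma \ref{lemma_sign} (not $w_2,w_3>k$), and neither $\bm s\neq\bm r$ nor the $\bm v$-equation at this stage (those hypotheses enter only through the proof of Lemma \ref{lemma_sign} itself). The paper's two-species comparison, on the other hand, is the symmetric, standard fitness-comparison argument that directly mirrors the stability computations for $E_1$ and $E_2$ in Lemma \ref{lemma_unstable}, and it is robust in situations where one can only conclude $w_i\le k$ (with possible equality) on the support of $\bm r$ but has strict sign information elsewhere; indeed, in the $n$-node setting of Section 4 the analogue of your strict inequality $w_i<k$ at upstream nodes is not available (Lemma \ref{lemma:mostupstream} only gives $w_i\le k$), which is why the general case is handled by the sign-pattern graph argument rather than by either spectral comparison. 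Both proofs are valid here; yours is a legitimate simplification for this three-node configuration.
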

\begin{proof}
    Suppose to the contrary that $(\bm u, \bm v)$ is a positive equilibrium of \eqref{patch}. Then $(\bm u, \bm v)$ satisfies \eqref{pss}.
By the first equation of \eqref{pss}, $\bm u$ is a positive eigenvector of matrix $M_1:=dD+qQ+\text{diag}(r_i(1-(u_i+v_i)/k))$ corresponding with eigenvalue 0. By the Perron-Frobenius Theorem, we must have $s(M_1)=0$. Similarly,  $\bm v$ is a positive eigenvector of matrix $M_2:=dD+qQ+\text{diag}(s_i(1-(u_i+v_i)/k))$ corresponding with eigenvalue 0 and $s(M_2)=0$.
By the assumptions on $\bm r$ and $\bm s$ and Lemma \ref{lemma_sign}, we have 
$$
r_1\left(1-\frac{u_1+v_1}{k}\right)>s_1\left(1-\frac{u_1+v_1}{k}\right)
$$
and 
$$
0=r_i\left(1-\frac{u_i+v_i}{k}\right)\ge s_i\left(1-\frac{u_i+v_i}{k}\right), \ i=2, 3.
$$
Therefore, by Lemma \ref{lemma_sp}, we must have $s(M_1)>s(M_2)$, which is a contradiction. This proves the result. 
\end{proof}

In the following two lemmas, we show that the semitrivial equilibrium $E_2$ is always unstable.
\begin{lemma}\label{lemma_sign1}
    Suppose that $D$ and $Q$ are given by \eqref{DQ}. Let  $\bm s=(s_1, s_2, s_3)\ge\bm 0$ with $s_2>0$ or $s_3>0$. Then the semitrivial equilibrium $E_2=(\bm 0, \bm v^*)$ satisfies $v^*_1<k$.
\end{lemma}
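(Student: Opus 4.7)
The plan is to argue by contradiction: I assume $v^*_1\ge k$ and push this inequality through the system using the structure of $D,Q$ in \eqref{DQ} until I contradict a conserved quantity. Note first that $\bm v^*\gg\bm 0$ exists by Lemma \ref{DS-single} because $\bm s>\bm 0$ (since $s_2>0$ or $s_3>0$). Writing the single-species equilibrium equations with the matrices \eqref{DQ} gives
\begin{align*}
0 &= -(2d+2q)v^*_1 + dv^*_2 + dv^*_3 + s_1 v^*_1\bigl(1-v^*_1/k\bigr),\\
0 &= (d+q)v^*_1 - dv^*_2 + s_2 v^*_2\bigl(1-v^*_2/k\bigr),\\
0 &= (d+q)v^*_1 - dv^*_3 + s_3 v^*_3\bigl(1-v^*_3/k\bigr).
\end{align*}
The first key step is to sum these three equations. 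Because each column of $dD+qQ$ sums to $0$, all the movement terms cancel, leaving the conserved identity
$$\sum_{i=1}^{3} s_i v^*_i\bigl(1-v^*_i/k\bigr)=0.$$

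Next I propagate $v^*_1\ge k$ into the downstream patches using the second and third equations separately. For patch $2$: if $s_2=0$, then the second equation reduces to $dv^*_2=(d+q)v^*_1$, so $v^*_2=\tfrac{d+q}{d}v^*_1>k$ already. If $s_2>0$, I would suppose for contradiction that $v^*_2\le k$; then $s_2v^*_2(1-v^*_2/k)\ge 0$, so $dv^*_2\ge(d+q)v^*_1\ge(d+q)k>dk$, forcing $v^*_2>k$, a contradiction. Either way, $v^*_2>k$. An identical argument applied to the third equation yields $v^*_3>k$.

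Substituting back into the conserved identity, every summand $s_iv^*_i(1-v^*_i/k)$ is nonpositive (using $v^*_1\ge k$ and $v^*_2,v^*_3>k$), and by the hypothesis that $s_2>0$ or $s_3>0$, the corresponding term is strictly negative, so the sum cannot vanish. This contradicts the identity and establishes $v^*_1<k$.

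The main point to be careful about is the downstream propagation when $s_2$ or $s_3$ vanishes, since then the logistic term disappears and one cannot argue by sign of $1-v^*_i/k$; the observation that this degenerate case is actually the easier one (the linear balance alone forces $v^*_i>k$) resolves the difficulty.
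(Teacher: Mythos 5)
Your proof is correct. It takes a somewhat different route from the paper's. The paper assumes $v_1^*\ge k$, uses the patch-1 equation to conclude that at least one of the net fluxes $(d+q)v_1^*-dv_2^*$, $(d+q)v_1^*-dv_3^*$ is nonpositive, and then runs a case analysis (WLOG $s_2>0$, with a separate subcase $s_3=0$) to contradict the second or third equilibrium equation directly via the bound $(d+q)v_1^*-dv_i^*\ge (d+q)k-dk>0$. You instead propagate $v_1^*\ge k$ downstream through the second and third equations separately, handling $s_i=0$ by the exact linear balance $dv_i^*=(d+q)v_1^*$ and $s_i>0$ by a sign argument, so that $v_2^*>k$ and $v_3^*>k$ hold unconditionally; the final contradiction then comes from the conservation identity $\sum_{i=1}^3 s_i v_i^*\left(1-v_i^*/k\right)=0$, which follows from the zero column sums of $dD+qQ$. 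Both arguments rest on the same elementary inequalities, but yours avoids the WLOG and subcase splitting, uses the patch-1 equation only through the summed identity, and is closer in spirit to the summation argument the paper itself employs for general $n$-node networks (compare \eqref{eq:sum_u} in the proof of Lemma \ref{lemma_no}), so it would extend more smoothly to that setting; the paper's version is a purely local flux argument that never needs the global balance.
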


\begin{proof}
We observe that $\bm v^*$ must satisfy
\[
0= \sum_{j=1}^3(dD_{ij}+qQ_{ij})v^*_j + s_iv_i^*\left(1-\frac{v_i^*}{k}\right), \ \ i=1, 2, 3.
\]
That is
\begin{eqnarray}
&&0=-(d+q)v_1^*+dv_2^*-(d+q)v^*_1+dv^*_3+s_1v^*_1\left(1-\frac{v^*_1}{k} \right), \nonumber\\
&&0=(d+q)v^*_1-dv^*_2+s_2v^*_2\left(1-\frac{v^*_2}{k} \right), \label{eq:sum_u_v1}\\
&&0=(d+q)v^*_1-dv^*_3+s_3v^*_3\left(1-\frac{v^*_3}{k} \right). \nonumber
\end{eqnarray}
Assume to the contrary that $v_1^*\ge k$. By the first equation of \eqref{eq:sum_u_v1}, either $(d+q)v^*_1-dv^*_2\le 0$ or $(d+q)v^*_1-dv^*_3\le 0$. Without loss of generality, say $s_2>0$. If $(d+q)v^*_1-dv^*_2\le 0$, then by the second equation of \eqref{eq:sum_u_v1}, we have $v^*_2\le k$. This implies 
\begin{equation}\label{v2ss}
(d+q)v^*_1-dv^*_2\ge(d+q)k-dk>0,
\end{equation}
which is a contradiction. Hence, $(d+q)v^*_1-dv^*_3\le 0$. If $s_3>0$, the third equation of \eqref{eq:sum_u_v1} implies $v^*_3\le k$. Then, 
$$
(d+q)v^*_1-dv^*_3\ge(d+q)k-dk>0,
$$
which is a contradiction. If $s_3=0$, then $(d+q)v^*_1-dv^*_3=0$. Again by $v_1^*\ge k$ and the first equation of \eqref{eq:sum_u_v1}, $(d+q)v^*_1-dv^*_2\le 0$. This leads to contradiction by the second equation of \eqref{eq:sum_u_v1} and \eqref{v2ss}.
\end{proof}

\begin{lemma}\label{lemma_unstable}
    Suppose that $D$ and $Q$ are given by \eqref{DQ}. Let $\bm r=(r, 0, 0)$ and $\bm s=(s_1, s_2, s_3)>\bm 0$ with $\bm s\neq \bm r$ and $\sum_{i=1}^3 s_i=r$. Then the semitrivial equilibrium $E_2=(\bm 0, \bm v^*)$ is unstable and the semitrivial equilibrium $E_1=(\bm u^*, \bm 0)$ is stable for model \eqref{patch}. 
\end{lemma}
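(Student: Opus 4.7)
The plan is to exploit the fact that the stability of each semitrivial equilibrium reduces to the sign of a spectral bound, and then to use Lemma \ref{lemma_sp} together with the observation that $(1,1,1)$ is a left null-eigenvector of $dD+qQ$ (which has column sums zero for the matrices in \eqref{DQ}), so that $s(dD+qQ)=0$.

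First, I would compute the semitrivial equilibrium $\bm u^*$ explicitly. With $\bm r=(r,0,0)$, the second and third patch equations in \eqref{pat-s} decouple and give $u_2^*=u_3^*=\tfrac{d+q}{d}u_1^*$; substituting into the first equation collapses all linear terms and leaves $ru_1^*(1-u_1^*/k)=0$, forcing $u_1^*=k$ and therefore $u_2^*=u_3^*=\tfrac{d+q}{d}k>k$.

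For the stability of $E_1$, I would consider the Jacobian-type matrix
\[
N:=dD+qQ+\mathrm{diag}\bigl(s_i(1-u_i^*/k)\bigr)=dD+qQ+\mathrm{diag}\bigl(0,\,-\tfrac{q}{d}s_2,\,-\tfrac{q}{d}s_3\bigr),
\]
so that $dD+qQ=N+\mathrm{diag}\bigl(0,\,\tfrac{q}{d}s_2,\,\tfrac{q}{d}s_3\bigr)$. Since $\bm s\neq \bm r$ and $\sum s_i=r$, at least one of $s_2,s_3$ is positive, so the added diagonal matrix is nonnegative and nonzero; $N$ is essentially nonnegative and $dD+qQ$ is irreducible. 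Lemma \ref{lemma_sp} then yields $s(N)<s(dD+qQ)=0$, so $\lambda_1(\bm s,\bm u^*)<0$ and $E_1$ is (locally asymptotically) stable.

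For the instability of $E_2$, I would invoke Lemma \ref{lemma_sign1} (whose hypothesis $s_2>0$ or $s_3>0$ again follows from $\bm s\neq(r,0,0)$ and $\sum s_i=r$) to conclude $v_1^*<k$. Then the stability matrix
\[
M:=dD+qQ+\mathrm{diag}\bigl(r_i(1-v_i^*/k)\bigr)=dD+qQ+\mathrm{diag}\bigl(r(1-v_1^*/k),\,0,\,0\bigr)
\]
is obtained from $dD+qQ$ by adding a nonnegative, nonzero diagonal perturbation. Another application of Lemma \ref{lemma_sp} gives $s(M)>s(dD+qQ)=0$, so $\lambda_1(\bm r,\bm v^*)>0$ and $E_2$ is unstable. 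I expect the only mild obstacle to be verifying the hypotheses of Lemma \ref{lemma_sp} (essential nonnegativity, nonzero perturbation, irreducibility of the sum), all of which follow directly from the explicit form of $D$ and $Q$ in \eqref{DQ} together with the standing assumption $\bm s\neq\bm r$.
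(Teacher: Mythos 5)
Your proposal is correct and follows essentially the same route as the paper: both reduce stability to the sign of the principal eigenvalues $\la_1(\bm s,\bm u^*)$ and $\la_1(\bm r,\bm v^*)$, use $\bm u^*=(k,(d+q)k/d,(d+q)k/d)$ and Lemma \ref{lemma_sign1} to determine the signs of the diagonal perturbations, and then apply Lemma \ref{lemma_sp} to compare with $s(dD+qQ)=0$. The only cosmetic difference is that you derive $\bm u^*$ directly instead of citing \cite{nguyen2022population}, which is fine.
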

\begin{proof}
    The stability of $E_2$ is determined by the sign of  $\la_1(\bm{r}, \bm v^*)$, which is the principal eigenvalue of  $dD+qQ+\text{diag}(r_i(1-v^*_i/k))$.
    By the assumptions on $\bm r$  and Lemma \ref{lemma_sign1}, we have 
$$
r_1\left(1-\frac{v^*_1}{k}\right)>0
$$
and 
$$
r_i\left(1-\frac{v^*_i}{k}\right)=0, \ i=2, 3.
$$
Therefore, by Lemma \ref{lemma_sp}, we must have $\la_1(\bm{r}, \bm v^*)>s(dD+qQ)=0$. Hence, $E_2$ is  unstable.

 The stability of $E_1$ is determined by the sign of  $\la_1(\bm{s}, \bm u^*)$, which is the principal eigenvalue of  $dD+qQ+\text{diag}(s_i(1-u^*_i/k))$. By \cite{nguyen2022population}, we have $\bm u^*=(k, (d+q)k/d, (d+q)k/d)$. So, 
$$
s_1\left(1-\frac{u^*_1}{k}\right)=0,
$$
and 
$$
s_i\left(1-\frac{u^*_i}{k}\right)\le 0, \ i=2, 3,
$$
with at least one strict sign by the assumption on $\bm s$. Therefore, by Lemma \ref{lemma_sp}, we must have $\la_1(\bm{s}, \bm u^*)<s(dD+qQ)=0$. Hence, $E_1$ is  stable. 
\end{proof}

By the theory of monotone dynamical systems (Lemma \ref{M1}) and Lemmas \ref{lemma_nonexistence} and \ref{lemma_unstable}, we obtain the following result:
\begin{theorem}
    Suppose that $D$ and $Q$ are given by \eqref{DQ}. Let $\bm r=(r, 0, 0)$ and $\bm s=(s_1, s_2, s_3)>\bm 0$ with $\bm s\neq \bm r$ and $\sum_{i=1}^3 s_i=r>0$. Then the semitrivial equilibrium $E_1=(\bm u^*, \bm 0)$ is globally asymptotically stable for model \eqref{patch}. 
\end{theorem}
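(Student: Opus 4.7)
The proof is essentially a direct assembly of the preceding lemmas via the monotone dynamical systems framework, so my plan is to route everything through Lemma \ref{M1} applied with the roles of the two species interchanged.

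The first step is to apply Lemma \ref{M1} with $\bm u$ and $\bm v$ swapped, which is legitimate because the system \eqref{patch} is symmetric in the two species (the movement matrices are common, and only the growth-rate vectors differ). In the swapped form the hypothesis is that $E_2$ is linearly unstable, and the two alternatives become: either (i) $E_1$ attracts every solution with $\bm u_0 > \bm 0$, or (ii) there exists a positive equilibrium $E$ satisfying $E_1 \ll_K E \ll_K E_2$.

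Next I invoke Lemma \ref{lemma_unstable} to see that the hypothesis of the swapped Lemma \ref{M1} is satisfied: $E_2$ is indeed linearly unstable under our assumptions on $\bm r$ and $\bm s$. Alternative (ii) is ruled out by Lemma \ref{lemma_nonexistence}, which says that under these same hypotheses model \eqref{patch} admits no positive equilibrium at all. Hence alternative (i) must hold, so $E_1$ attracts every solution with $\bm u_0 > \bm 0$. Combining this attractivity with the local asymptotic stability of $E_1$ (again from Lemma \ref{lemma_unstable}), we conclude that $E_1$ is globally asymptotically stable for \eqref{patch} on the biologically relevant initial-data set specified in \eqref{patch}, completing the proof.

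There is essentially no obstacle here, since all the substantive work has been done in Lemmas \ref{lemma_nonexistence} and \ref{lemma_unstable}; the only care required is in the bookkeeping of the swap in Lemma \ref{M1}, namely checking that ``$\bm v_0 > \bm 0$'' in the statement becomes ``$\bm u_0 > \bm 0$'' after interchanging species, and that the cone order $\le_K$ is reversed consistently so that the positive equilibrium in the swapped alternative (ii) still lies strictly between the two semitrivial equilibria. Once this is noted, the proof reduces to citing the three lemmas in one or two sentences.
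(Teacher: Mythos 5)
Your proposal is correct and follows essentially the same route as the paper, which likewise deduces the theorem by combining the swapped form of Lemma \ref{M1} (stated just after that lemma: if $E_2$ is linearly unstable and no positive equilibrium exists, then $E_1$ is globally attractive) with Lemmas \ref{lemma_nonexistence} and \ref{lemma_unstable}. Your extra care about interchanging the species and the cone order is exactly the bookkeeping the paper leaves implicit.
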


\section{Stream networks of $n$ nodes}

In this section, we generalize the results in Section \ref{sec3nodes} to a certain type of network of $n$ nodes. We recall the definition of stream networks of  $n$ nodes in \cite{nguyen2022population}.

\begin{definition}\label{def:leveled_graph}
Let $G$ be a directed graph, and denote the set of nodes of $G$ by $V$. Consider a function $f: V\to \mathbb{Z}_{\geq 0}$. For each node $i$, we call $f(i)$ the \textit{level} of the node and $(G,f)$  a \textit{leveled graph} if the following assumptions are satisfied
\begin{enumerate}
    \item[(i)] For each $0\leq k\leq \max_{i\in V}\{f(i)\}$, there exists a node $j$ such that $f(j)=k$.
    \item[(ii)] For each pair of nodes $i$ and $j$, there is no edge between $i$ and $j$ if $|f(i)-f(j)|\neq 1$.
\end{enumerate}
\end{definition}

We use level graphs to describe a type of stream network, where the nodes in further downstream positions have larger levels. The left digraph in Figure \ref{leveled_graph} is a leveled graph while the right digraph is not. 

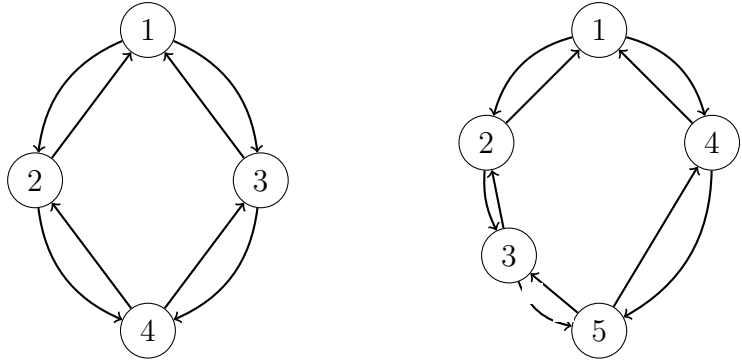
\begin{figure}[htbp]
\centering
\begin{tikzpicture}
\begin{scope}[every node/.style={draw}, node distance= 1.5 cm]

    \node[circle] (1) at (-3+0,     0-0) {$1$};
    \node[circle] (2) at (-3-1.5,   0-2) {$2$};
    \node[circle] (3) at (-3+1.5,   0-2) {$3$};
    \node[circle] (4) at (-3+0,     0-4) {$4$};

    \node[circle] (5) at (3+0,      0-0) {$1$};
    \node[circle] (6) at (3-1.5,    0-1.5) {$2$};
    \node[circle] (7) at (3-1.2,    0-3) {$3$};
    \node[circle] (8) at (3+1.5,    0-1.5) {$4$};
    \node[circle] (9) at (3+0,      0-4) {$5$};
 
\end{scope}
\begin{scope}[every node/.style={fill=white},
              every edge/.style={thick}]
    
    \draw[thick] [->](1) to [bend right] node[left=5] {{}} (2); 
    \draw[thick] [<-](1) to node[right=4] {{}} (2); 
    \draw[thick]  [->](1) to [bend left] node[right=5] {{}} (3); 
    \draw[thick] [<-](1) to node[left=4] {{}} (3); 
    \draw[thick] [->](2) to [bend right] node[right=5] {{}} (4); 
    \draw[thick] [<-](2) to node[left=4] {{}} (4); 
    \draw[thick] [->](3) to [bend left] node[right=5] {{}} (4); 
    \draw[thick] [<-](3) to node[left=4] {{}} (4); 

    \draw[thick] [->](5) to [bend right] node[left=5] {{}} (6); 
    \draw[thick] [<-](5) to node[right=4] {{}} (6); 
    \draw[thick] [->](6) to [bend right=15] node[right=5] {{}} (7); 
    \draw[thick] [<-](6) to (7); 
    \draw[thick] [->](5) to [bend left] node[right=5] {{}} (8); 
    \draw[thick] [<-](5) to node[left=4] {{}} (8); 
    \draw[thick] [->](8) to [bend left] node[right=5] {{}} (9); 
    \draw[thick] [<-](8) to node[left=4] {{}} (9); 
    \draw[thick] [->](7) to [bend right] node[right=5] {{}} (9); 
    \draw[thick] [<-](7) to node[left=4] {{}} (9); 

\end{scope}
\end{tikzpicture}
\caption{The left digraph is a leveled graph with level function $f(1)=0,f(2)=f(3)=1,f(4)=2$. The right digraph cannot be a leveled graph for any choice of level function.}\label{leveled_graph}

\end{figure}

\begin{definition}[\cite{nguyen2022population}]
Consider a graph $G$ with level function $f$ and connection matrix $L$. We say that $(G,f, L)$ is a \textit{homogeneous flow stream network} if the following assumptions are satisfied:
\begin{enumerate}
    \item[(i)] The matrix $L$ is irreducible.
    \item[(ii)] If there is an edge from node $i$ to node $j$, then there is also an edge from node $j$ to node $i$. 
    \item[(iii)] If there is an edge from node $i$ to node $j$, then the weight is $\ell_{ij} = d+q$ if $f(j)-f(i)=1$ (i.e. the edge is from an upstream to a downstream node) and $\ell_{ij}=d$ if $f(i)-f(j)=1$ (i.e. the edge is from a downstream to an upstream node). Here, $d$ and $q$ are positive constants.
\end{enumerate}
\end{definition}

The connection matrix $L$ of a homogeneous flow stream network can be written as $L=dD+qQ$. We recall the following result about the positive eigenvector of $L$ proved in \cite{nguyen2022population}.
\begin{lemma}\label{lemma vector v}
Let $(G,f,L)$ be a homogeneous flow stream network. Let $\bm v$ be the solution to $L\bm v=(dD+qQ)\bm v = 0$. Then the eigenvector $\bm v$ writes, up to a constant multiple, as
\[
v_i = \left(\frac{d+q}{d}\right)^{f(i)}.
\]
\end{lemma}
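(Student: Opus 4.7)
The strategy is to verify directly that the vector $\bm v$ defined by $v_i = \left((d+q)/d\right)^{f(i)}$ satisfies $L\bm v = 0$, and then invoke uniqueness of the zero eigenvector (up to scalar multiple) that follows from the Perron--Frobenius theorem applied to the irreducible, essentially nonnegative matrix $L$, noting that $0$ is its principal eigenvalue as already recorded after \eqref{eq-L}.

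The main computation is to fix a node $i$ and write out the $i$-th row of $L\bm v$. By Definition \ref{def:leveled_graph}(ii), the only neighbors of $i$ in $G$ are nodes $j$ with $|f(j)-f(i)| = 1$, so I would split the neighbors into the sets
\[
A_i^- = \{\, j : j \sim i,\ f(j) = f(i)-1 \,\},\qquad A_i^+ = \{\, j : j \sim i,\ f(j) = f(i)+1 \,\}.
\]
Using the homogeneous flow weights, the inflow entry $L_{ij}$ (movement from $j$ to $i$) equals $d+q$ when $j \in A_i^-$ (a downstream move into $i$) and $d$ when $j \in A_i^+$ (an upstream move into $i$). Correspondingly the outflow rate from $i$ is $|A_i^-|\,d + |A_i^+|(d+q)$. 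Writing $\alpha := (d+q)/d$ so that $v_j = \alpha^{f(j)}$, the $i$-th component of $L\bm v$ becomes
\[
-\bigl[|A_i^-|\,d + |A_i^+|(d+q)\bigr]\alpha^{f(i)} \;+\; |A_i^-|(d+q)\,\alpha^{f(i)-1} \;+\; |A_i^+|\,d\,\alpha^{f(i)+1}.
\]
Factoring out $\alpha^{f(i)-1}$ and using $d\alpha = d+q$ as well as $d\alpha^2 = (d+q)\alpha$, the two terms involving $|A_i^-|$ cancel and the two terms involving $|A_i^+|$ cancel, giving $0$ as required.

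I expect no serious obstacle here; the computation is essentially algebraic once the leveled-graph/homogeneous-flow structure is unpacked. The only mildly delicate point is making sure the argument covers the boundary cases where $A_i^- = \emptyset$ (the most upstream nodes, where $f(i) = 0$) or $A_i^+ = \emptyset$ (the most downstream nodes, where $f(i) = \max_j f(j)$); the same calculation handles both, since the vanishing is term-by-term in $|A_i^\pm|$. Finally, because the irreducible matrix $L$ has $0$ as a simple eigenvalue with a positive eigenvector by the Perron--Frobenius theorem, the verified solution is the unique $\bm v$ solving $L\bm v = 0$ up to positive scalar multiple, completing the proof.
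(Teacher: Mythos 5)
Your proof is correct: the row-by-row verification that $v_i=\bigl((d+q)/d\bigr)^{f(i)}$ lies in the kernel of $L=dD+qQ$, combined with the simplicity of the principal eigenvalue $0$ of the irreducible, essentially nonnegative matrix $L$ (which makes the kernel one-dimensional), is exactly the standard argument, and your handling of the boundary cases $A_i^-=\emptyset$ or $A_i^+=\emptyset$ is fine. The paper itself does not reprove this lemma but recalls it from \cite{nguyen2022population}, where the proof proceeds by essentially the same direct computation, so there is nothing to add.
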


Let $\bm u^*=(u_1^*, \dots, u_n^*)$ be the positive equilibrium of 
\begin{equation}\label{pat-sn}
\ds\frac{du_i}{dt}=\ds\sum_{j=1}^{n}(dD_{ij}+qQ_{ij})u_j+r_iu_i\left(1-\frac{u_i}{k}\right), \ \ i=1,\dots, n,\;\;t>0.
\end{equation}
The total biomass $\mathcal{K}$ of $\bm u^*$ is defined as $\mathcal{K}:=\sum_{i=1}^n u_i^*$. We also recall the following  theorem in \cite{nguyen2022population} about the total biomass $\mathcal{K}$:

\begin{theorem}\label{theorem:biomass_n}
Let $(G,f,L)$ be a homogeneous flow stream network, where $L=dD+qQ$. Suppose that $\bm r=(r_1,\dots, r_n)>\bm 0$ with $\sum_{i=1}^nr_i=r>0$ and $k>0$. Then the total biomass $\mathcal{K}$ of the positive equilibrium of \eqref{pat-sn} has the upper bound
\[
\mathcal{K} \leq  k\sum_{i=1}^n \left(1+\frac{q}{d}\right)^{f(i)}.
\]
Moreover, the maximum is achieved as the upper bound when $r_i=0$ for any node $i$ with positive level, i.e. $f(i)>0$.  In this case,  $u_i^*=k(\frac{d+q}{d})^{f(i)}$.
\end{theorem}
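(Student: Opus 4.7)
The plan is to run a discrete maximum-principle argument using the positive right eigenvector $\bm v$ of $L = dD+qQ$ provided by Lemma \ref{lemma vector v}, normalized as $v_i = (1+q/d)^{f(i)}$. I would set $\phi_i := u_i^*/v_i$ and let $M := \max_i \phi_i$, attained at some node $i_0$. Decomposing $u_j^* = \phi_j v_j$ and using $(L\bm v)_{i_0}=0$,
\begin{equation*}
(L\bm u^*)_{i_0} = \sum_{j=1}^n L_{i_0 j}(\phi_j - M) v_j = \sum_{j\neq i_0} L_{i_0 j}(\phi_j - M) v_j \le 0,
\end{equation*}
because $L_{i_0 j} \ge 0$ and $\phi_j - M \le 0$ for $j\neq i_0$, while the $j=i_0$ term vanishes.

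Combined with the steady-state equation $(L\bm u^*)_{i_0} = -r_{i_0}u_{i_0}^*(1-u_{i_0}^*/k)$, this gives $r_{i_0}u_{i_0}^*(1-u_{i_0}^*/k)\ge 0$. If $r_{i_0}>0$, then $u_{i_0}^*\le k$, hence $M = u_{i_0}^*/v_{i_0} \le k/v_{i_0}\le k$ since $v_{i_0}\ge 1$. The delicate case --- and the only real obstacle --- is $r_{i_0}=0$. Then $(L\bm u^*)_{i_0} = 0$, which forces every term in the sum above to vanish, so every $j$ with $L_{i_0 j}>0$ also satisfies $\phi_j = M$. Iterating this and using that $L$ is irreducible, together with assumption (ii) on the network (edges come in reciprocal pairs), the maximizer set $S := \{i : \phi_i = M\}$ would exhaust all of $V$ if $r_i=0$ held throughout $S$; but that contradicts $\bm r > \bm 0$. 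Hence $M$ is attained at some node with positive growth rate, and the previous step yields $M\le k$.

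From $\phi_i \le k$ we obtain $u_i^* \le kv_i$ for every $i$, so $\mathcal{K} = \sum_i u_i^* \le k\sum_i (1+q/d)^{f(i)}$, the claimed bound. For sharpness under the hypothesis that $r_i = 0$ whenever $f(i)>0$, I would directly verify that the vector with entries $u_i^* = kv_i = k(1+q/d)^{f(i)}$ solves the equilibrium system: $L\bm u^* = kL\bm v = 0$, and the logistic factor $r_i(1-u_i^*/k)$ vanishes at every node, either because $r_i=0$ (when $f(i)>0$) or because $u_i^* = k$ (when $f(i)=0$). Lemma \ref{DS-single} then identifies this as the unique positive equilibrium, and $\mathcal{K} = k\sum_i (1+q/d)^{f(i)}$ attains the upper bound.
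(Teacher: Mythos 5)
Your argument is correct, and it proves the componentwise bound $u_i^*\le k v_i$ (hence the biomass bound) by a route genuinely different from the one behind the paper. The theorem is quoted from \cite{nguyen2022population}, and the proof the authors have in mind is the dynamic comparison argument they reproduce for Lemma \ref{lemma_len}: take $\bar{\bm u}=k\bm v$ with $\bm v$ the null vector of $L$ from Lemma \ref{lemma vector v}, observe that $\bar{\bm u}$ is a supersolution of \eqref{pat-sn} because $r_i\bar u_i(1-\bar u_i/k)\le 0$ at every node, and invoke the strongly monotone semiflow (\cite[Theorem 4.1.1, Proposition 3.2.1]{smith2008monotone}) to conclude that the solution started at $\bar{\bm u}$ decreases to the unique positive equilibrium, so $\bm u^*\le k\bm v$; their related nonexistence results use a sign-pattern graph argument instead. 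You obtain the same inequality statically, via a discrete maximum principle for $\phi_i=u_i^*/v_i$ at its maximizer, which needs no dynamical-systems machinery at all: the key inequality $(L\bm u^*)_{i_0}=\sum_{j\ne i_0}L_{i_0j}(\phi_j-M)v_j\le 0$ is right, the case $r_{i_0}>0$ gives $u_{i_0}^*\le k$ and hence $M\le k/v_{i_0}\le k$ since $v_{i_0}\ge 1$, and your treatment of the delicate case $r_{i_0}=0$ --- forcing all in-neighbors of $i_0$ into the maximizer set and propagating --- is sound; note that irreducibility (strong connectivity) alone already closes that argument, so the appeal to the reciprocal-edge assumption (ii) is unnecessary though harmless. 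Your verification of the equality case (checking $k\bm v$ is an equilibrium and invoking uniqueness from Lemma \ref{DS-single}) matches the paper's conclusion $u_i^*=k((d+q)/d)^{f(i)}$. The trade-off: the paper's supersolution argument is shorter given the monotone-systems toolkit it already uses, while yours is more elementary and purely algebraic, resting only on irreducibility and the explicit null vector of $L$.
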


The main result we prove in this section is the following, which states that to gain a competitive advantage in a homogeneous flow stream network one needs to distribute all the resources to the upstream ends, i.e. nodes with level 0. 

\begin{theorem}\label{theorem_main}
   Let $(G,f,L)$ be a homogeneous flow stream network, where $L=dD+qQ$. Let $k>0$ and $\bm r, \bm s>\bm 0$ such that $\sum_{i=1}^n r_i=\sum_{i=1}^n s_i=r>0$.  
   Suppose that $r_i=0$ for any node $i$ with $f(i)>0$ and there exists at least one node $i_0$ with $f(i_0)>0$ such that $s_{i_0}>0$.
   Then the semitrivial equilibrium $E_1=(\bm u^*, \bm 0)$ is globally asymptotically stable for \eqref{patch}. 
\end{theorem}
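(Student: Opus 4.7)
The plan is to mirror the logical structure of Section \ref{sec3nodes}: prove an $n$-node analogue of the sign lemma \ref{lemma_sign}, use it with Lemma \ref{lemma_sp} to rule out positive equilibria (analogue of Lemma \ref{lemma_nonexistence}), establish the sign condition at $E_2$ (analogue of Lemma \ref{lemma_sign1}) to deduce instability of $E_2$ and stability of $E_1$ (analogue of Lemma \ref{lemma_unstable}), and finally invoke Lemma \ref{M1} to conclude. Theorem \ref{theorem:biomass_n} supplies the explicit form $u_i^* = k((d+q)/d)^{f(i)}$, which will be essential in the stability computation at $E_1$.

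The main obstacle is the sign lemma: at any positive equilibrium $(\bm u, \bm v)$ of \eqref{patch}, writing $w_i := u_i + v_i$, one expects $w_i < k$ whenever $f(i) = 0$ and $w_i > k$ whenever $f(i) > 0$. Adding the two equilibrium equations at node $i$ and using $r_i = 0$ for $f(i) > 0$ yields
\begin{equation*}
\sum_{j \in U(i)} (d+q) w_j + \sum_{j \in D(i)} d w_j - \bigl(|U(i)|d + |D(i)|(d+q)\bigr) w_i + (r_i u_i + s_i v_i)\!\left(1 - \frac{w_i}{k}\right) = 0,
\end{equation*}
where $U(i)$ and $D(i)$ are the upstream and downstream neighbors of $i$ in the leveled graph. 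I would argue by contradiction, propagating inequalities through the graph. If some positive-level node had $w \leq k$, taking such a node of minimal level and exploiting nonnegativity of the source term $s_i v_i(1 - w_i/k)$ bounds its upstream $w_j$ values; iterating upstream reaches a level-$0$ node $i_0$ with $w_{i_0} < k$. The strictly positive contribution $r_{i_0} u_{i_0}(1 - w_{i_0}/k) > 0$ in the equation at $i_0$ (using a level-$0$ node with $r_{i_0}>0$, which exists since $\sum r_i = r > 0$) then forces a downstream neighbor to also violate $w > k$, and tracing this propagation through the graph produces an inconsistency. The delicate point is that branching in the leveled graph replaces the one-dimensional cascade of the three-node argument with a multi-path propagation, so the iteration likely needs an extremal argument --- for instance on $\min_i w_i/\phi_i$ with $\phi_i = ((d+q)/d)^{f(i)}$ from Lemma \ref{lemma vector v}, so that $L\phi=0$ converts the balance into a maximum-principle inequality --- combined with induction on the level. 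A symmetric argument rules out $w_i \geq k$ at any level-$0$ node.

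Granted the sign lemma, the remaining steps are routine adaptations of Section \ref{sec3nodes}. For the non-existence of a positive equilibrium, $\bm u$ and $\bm v$ are positive Perron eigenvectors for eigenvalue $0$ of $M_1 := dD + qQ + \mathrm{diag}(r_i(1 - w_i/k))$ and $M_2 := dD + qQ + \mathrm{diag}(s_i(1 - w_i/k))$, so $s(M_1) = s(M_2) = 0$; the sign lemma together with $r_i = 0$ for $f(i) > 0$ and $s_{i_0} > 0$ at some $i_0$ with $f(i_0) > 0$ (where $w_{i_0} > k$) gives $M_1 \geq M_2$ entrywise with strict inequality at $i_0$, and Lemma \ref{lemma_sp} then yields $s(M_1) > s(M_2)$, a contradiction. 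The sign condition $v_i^* < k$ at every level-$0$ node for $E_2 = (\bm 0, \bm v^*)$ follows from the same propagation argument applied to the single-species equation for $\bm v^*$. Stability of $E_1$ then follows since $s_i(1 - u_i^*/k) = s_i(1 - ((d+q)/d)^{f(i)}) \leq 0$ for all $i$ with strict negativity at $i_0$, hence $\lambda_1(\bm s, \bm u^*) < 0$ via Lemma \ref{lemma_sp}; instability of $E_2$ follows since $r_i(1 - v_i^*/k) \geq 0$ for all $i$ with strict positivity at some level-$0$ node where $r_i > 0$, hence $\lambda_1(\bm r, \bm v^*) > 0$. Applying the monotone-systems dichotomy Lemma \ref{M1} (with the roles of the two species swapped) then yields that $E_1$ is globally asymptotically stable.
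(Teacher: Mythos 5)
Your overall plan (no positive equilibrium, $E_2$ unstable, $E_1$ stable, then Lemma \ref{M1}) matches the paper, and your treatment of the stability of $E_1$ via $u_i^*=k((d+q)/d)^{f(i)}$ and Lemma \ref{lemma_sp}, the column-sum computation for instability of $E_2$, and the final appeal to Lemma \ref{M1} are all essentially the paper's steps. The genuine gap is in the non-existence step, and it is twofold. First, the $n$-node sign lemma you want (strictly $w_i<k$ at every level-$0$ node and $w_i>k$ at every positive-level node) cannot be obtained by the propagation/extremal argument you sketch. Summing only the $\bm u$-equations at a hypothetical positive equilibrium (column sums of $dD+qQ$ vanish and $r_i=0$ for $f(i)>0$) gives $\sum_{i:f(i)=0}r_iu_i(1-w_i/k)=0$, while the correct upstream-propagation argument only yields $w_i\le k$ at level-$0$ nodes with $r_i>0$ (this is the paper's Lemma \ref{lemma:mostupstream}); together these force $w_i=k$ at those nodes, so ``a symmetric argument rules out $w_i\ge k$ at any level-$0$ node'' is not available. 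Moreover, the degenerate configuration $w_i=k((d+q)/d)^{f(i)}$, in which every net flow $(d+q)w_i^*-dw_j^*$ between adjacent nodes vanishes, satisfies all the local balance relations, and a maximum principle on $\min_i w_i/\phi_i$ with $\phi_i=((d+q)/d)^{f(i)}$ cannot exclude this constant case. The paper's proof of Lemma \ref{lemma_no} is built exactly around this equality case: the sign-pattern graph argument shows $w_i=k$ at level-$0$ nodes with $r_i>0$, then that all edges must be undirected (hence, by connectivity and Corollary \ref{cor:path}, $w_i\equiv k((d+q)/d)^{f(i)}$), and the contradiction comes only at the end from the global identity $0=\sum_i(r_iu_i+s_iv_i)(1-w_i^*/k)$ together with $s_{i_0}>0$ and $w_{i_0}^*>k$ — no spectral comparison is used there at all.

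Second, even if your sign lemma were granted, the Perron--Frobenius step fails in general: $M_1-M_2=\mathrm{diag}\big((r_i-s_i)(1-w_i/k)\big)$, and the hypotheses do not force $r_i\ge s_i$ pointwise on the level-$0$ set (take two level-$0$ nodes with all of $\bm r$ on one of them and $s_i>0$ on the other); at such a node with $w_i<k$ the corresponding diagonal entry is negative, so $M_1\ge M_2$ fails and Lemma \ref{lemma_sp} yields nothing. The three-node argument of Lemma \ref{lemma_nonexistence} dodges this only because configuration (i) has a single level-$0$ node carrying all of $r$, so $r_1=r\ge s_1$ automatically; this is precisely why the paper abandons that route for general networks. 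A smaller point: your claim that $v_i^*<k$ at level-$0$ nodes ``follows from the same propagation argument'' is left unproved; the paper instead derives it (Lemma \ref{lemma_len}) by a monotone supersolution argument with $\bar{\bm u}=k\bm v$, where $\bm v$ is the Perron vector $v_i=((d+q)/d)^{f(i)}$ of $L$ from Lemma \ref{lemma vector v}, which is likely the cleaner repair for that piece as well.
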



\subsection{Proof of Theorem \ref{theorem_main}}
Suppose to the contrary that $(\bm u,\bm v)$ is a positive equilibrium of \eqref{patch}. Let  $w_i^*=u_i+v_i$ for each $i=1, \dots, n$.
To show the non-existence of a positive equilibrium $E^*$, we recall the sign pattern approach used in \cite{nguyen2022population}. We associate the stream network with a \textit{sign pattern graph}. The nodes of the sign pattern graph are the same as the nodes in the stream network, but additionally we assign each node $i$ in the sign pattern graph  with a sign based on the value of $w_i^*$:
\[
\text{sign}(\text{node } i) = \begin{cases}
    + \quad &\text{if} \quad w_i^* <k \quad \text{and} \quad r_i>0 \text{ or } s_i>0\\
    - \quad &\text{if} \quad w_i^* >k \quad \text{and} \quad r_i>0 \text{ or } s_i>0\\
    0 \quad &\text{if} \quad w_i^*=k\\
    0^+ \quad &\text{if} \quad w_i^*<k \quad \text{and} \quad r_i=s_i=0\\
    0^- \quad &\text{if} \quad w_i^*>k \quad \text{and} \quad r_i=s_i=0.\\
\end{cases}
\]
Next, if there is an edge between node $i$ and an adjacent, downstream node $j$ in the stream network, we draw an edge between node $i$ and node $j$ in the sign pattern graph as follows:
\begin{enumerate}
    \item There is a directed edge from node $i$ to node $j$ (adjacent, downstream of node $i$), denoted $i\to j$, if 
    \[
    (d+q)w_i^* > dw_j^*.
    \]
    \item There is a directed edge from node $j$ to node $i$ (adjacent, downstream of node $i$), denoted $j\to i$, if
    \[
    (d+q)w_i^* < dw_j^*.
    \]
    \item There is an undirected edge between node $i$ and node $j$, denoted $i - j$, if
    \[
    (d+q)w_i^* = dw_j^*.
    \]
\end{enumerate} 
The edges in the sign pattern graph describe the \textit{net flow} between adjacent nodes in the stream network.
\begin{lemma}\label{lemma:edge}
For any $(+)$ node, there must be at least one directed edge out of the node. For any $(-)$ node, there must be at least one directed edge into the node. For any $(0),(0^+),(0^-)$ node, either all edges connected to the node are undirected, or there must be at least one edge into and one edge out of the node.
\end{lemma}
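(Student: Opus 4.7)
The plan is to obtain the stated structural property by interpreting the equilibrium equations at node $i$ as a balance between net flow across the incident edges and the local growth term. Starting from the assumption that $(\bm u,\bm v)$ is a positive equilibrium, I would add the $i$-th equation of the $\bm u$-system to the $i$-th equation of the $\bm v$-system to obtain, for each $i$,
\begin{equation*}
\sum_{j=1}^{n}(dD_{ij}+qQ_{ij})\,w_j^{*} \;=\; -\bigl(r_i u_i + s_i v_i\bigr)\!\left(1-\frac{w_i^{*}}{k}\right).
\end{equation*}
Using $L_{ii}=-\sum_{j\neq i}L_{ji}$ and the fact that for the homogeneous flow stream network only nodes $j$ adjacent to $i$ contribute, I would rewrite the left-hand side as
\begin{equation*}
-\sum_{j=1}^{n}L_{ij}w_j^{*}
= \sum_{j\text{ downstream of }i}\!\bigl[(d+q)w_i^{*}-d\,w_j^{*}\bigr]
+ \sum_{j\text{ upstream of }i}\!\bigl[d\,w_i^{*}-(d+q)w_j^{*}\bigr],
\end{equation*}
which is exactly the sum of the \emph{net outflows} from $i$ across its incident edges, with the sign convention matching that used to orient edges in the sign pattern graph.

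Once this identification is made, the lemma reduces to checking the sign of the right-hand side against the node label. Since $\bm u,\bm v\gg\bm 0$ at a positive equilibrium, $r_i u_i + s_i v_i>0$ whenever $r_i>0$ or $s_i>0$, and $r_i u_i + s_i v_i=0$ when $r_i=s_i=0$. Therefore the total net outflow from $i$ is strictly positive at a $(+)$ node (forcing at least one edge oriented out of $i$), strictly negative at a $(-)$ node (forcing at least one edge oriented into $i$), and exactly zero at any of $(0)$, $(0^{+})$, $(0^{-})$ nodes. In the last case, if some incident edge carries a nonzero net flow, the sum over all incident edges can vanish only if there is at least one incoming and one outgoing directed edge; otherwise every incident edge is undirected.

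The only real work is the bookkeeping in the first paragraph, namely confirming that the Laplacian-type sum $-\sum_j L_{ij}w_j^{*}$ decomposes cleanly as the signed sum of net flows across adjacent edges when $L=dD+qQ$ arises from a homogeneous flow stream network. This relies on conditions (ii) and (iii) of the definition of a homogeneous flow stream network (so that each off-diagonal entry $L_{ij}$ is either $0$, $d$, or $d+q$ depending on the level difference), and it is a direct computation rather than a genuine obstacle. After that, the three cases in the lemma are immediate consequences of a sign inspection, so I do not expect any nontrivial step beyond the flow-balance identification.
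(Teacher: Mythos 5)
Your proposal is correct and follows essentially the same route as the paper: add the $u_i$ and $v_i$ equilibrium equations at node $i$, rewrite the movement terms as the signed sum of net flows across incident edges (the paper uses net inflows where you use net outflows, a sign convention difference only), and conclude by inspecting the sign of $(r_iu_i+s_iv_i)\left(1-w_i^*/k\right)$ in each of the five node-label cases.
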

\begin{proof}
Suppose node $i$ has sign $(+)$. We add the equations of $u_i$ and $v_i$  to obtain
\[
0=\sum_{j: f(j)-f(i)=1}(dw_j^* - (d+q)w_i^*) + \sum_{j: f(j)-f(i)=-1}((d+q)w_j^* - dw_i^*) + (r_iu_i+s_iv_i)\bigg(1-\frac{w_i^*}{k}\bigg).
\]
Since $w_i^* < k$, there must be a negative term in the sum above corresponding to a node $j$ adjacent to node $i$. It is easy to check that whether $j$ is upstream or downstream of $i$ we always have an edge $i\to j$ in the sign pattern graph. We can repeat the same argument for the nodes with sign $(-), (0), (0^+), (0^-)$.
\end{proof} 

\begin{lemma}\label{lemma:inequality}
If $i-j$ or $i\to j$, then
\[
w_i^* \geq \bigg(\frac{d+q}{d}\bigg)^{f(i)-f(j)}w_j^*. 
\]
\end{lemma}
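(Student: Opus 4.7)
The plan is to read the inequality directly off the definitions of the sign pattern edges, with a short induction if $i - j$ or $i\to j$ is meant to denote a path rather than a single edge. By the construction of the sign pattern graph, an edge between $i$ and $j$ can only appear when $i$ and $j$ are adjacent in the underlying stream network; in a leveled graph this forces $|f(i)-f(j)|=1$. So the base case is to treat a single adjacency.

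For the base case, assume without loss of generality that $j$ is the adjacent downstream neighbour of $i$, i.e.\ $f(j)=f(i)+1$ (the opposite orientation is handled by the same argument with the roles of $i$ and $j$ swapped). The defining conditions are
\[
i\to j \iff (d+q)w_i^* > dw_j^*, \qquad i-j \iff (d+q)w_i^* = dw_j^*,
\]
so in either case $(d+q)w_i^* \geq dw_j^*$. Dividing by $d+q$ yields
\[
w_i^* \;\geq\; \frac{d}{d+q}\,w_j^* \;=\; \left(\frac{d+q}{d}\right)^{-1} w_j^* \;=\; \left(\frac{d+q}{d}\right)^{f(i)-f(j)} w_j^*,
\]
which is exactly the claim.

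If the notation $i\to j$ or $i-j$ in the lemma is intended to include a concatenation $i=i_0\,{\to}\,i_1\,{\to}\,\cdots\,{\to}\,i_m=j$ (possibly with some undirected steps), the proof extends by straightforward induction on $m$: apply the single-edge case to each consecutive pair to obtain $w_{i_k}^*\geq ((d+q)/d)^{f(i_k)-f(i_{k+1})} w_{i_{k+1}}^*$, then multiply the $m$ inequalities. The exponents telescope to $f(i_0)-f(i_m)=f(i)-f(j)$, giving the desired bound.

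I do not anticipate a genuine obstacle here: the lemma is essentially an unpacking of the definition combined with telescoping. The only care needed is in the bookkeeping of the exponent sign — one must check that, regardless of whether a single step goes to an adjacent upstream or downstream node, the resulting inequality always carries the factor $((d+q)/d)^{\Delta f}$ with the correct $\Delta f = f(i_k)-f(i_{k+1})$, which follows because the defining inequality of the sign pattern edges is symmetric under exchanging the two sides of the adjacency.
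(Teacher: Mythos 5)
Your proof is correct and follows essentially the same route as the paper: the lemma concerns a single edge of the sign pattern graph, and the paper simply writes out both orientation cases ($i$ upstream of $j$, giving $w_i^*\ge \frac{d}{d+q}w_j^*$, and $i$ downstream of $j$, giving $w_i^*\ge \frac{d+q}{d}w_j^*$), exactly the computation you do once and then obtain for the other orientation by the symmetry of the edge definition. Your telescoping extension to paths is not needed here; it is precisely the content of the paper's subsequent corollary on paths, proved there the same way.
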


\begin{proof}
If node $i$ is upstream of node $j$, we have $f(i)-f(j)=-1$. Since $i-j$ or $i\to j$, by the way we assign edges to the sign pattern graph we must have
\[
w_i^* \geq \frac{d}{d+q} w_j^* = \bigg(\frac{d+q}{d}\bigg)^{f(i)-f(j)}w_j^*. 
\]
If node $i$ is downstream of node $j$, we have $f(i)-f(j)=1$. Since $i-j$ or $i\to j$, again we have
\[
w_i^* \geq \frac{d+q}{d} w_j^* = \bigg(\frac{d+q}{d}\bigg)^{f(i)-f(j)}w_j^*. 
\]
\end{proof}
The following corollary follows directly from Lemma \ref{lemma:inequality}.
\begin{corollary}\label{cor:updown}
If node $i$ is downstream of node $j$ (i.e. $f(i)-f(j)=1$) and $i-j$ or $i\to j$ then $w_i^* > w_j^*$. 

\end{corollary}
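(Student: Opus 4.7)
The plan is to observe that Corollary \ref{cor:updown} is a direct specialization of Lemma \ref{lemma:inequality}, sharpened from a weak to a strict inequality by a positivity argument. Concretely, I would substitute $f(i)-f(j)=1$ into the bound from Lemma \ref{lemma:inequality} to obtain
\[
w_i^* \;\geq\; \left(\frac{d+q}{d}\right)^{1} w_j^* \;=\; \frac{d+q}{d}\, w_j^*.
\]
This uses no ingredients beyond the statement of Lemma \ref{lemma:inequality}, which is exactly why the paper asserts the corollary follows ``directly.''

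To upgrade $\geq$ to the strict inequality $w_i^* > w_j^*$, I would invoke two facts. First, the constants $d$ and $q$ are strictly positive by the definition of a homogeneous flow stream network, so $(d+q)/d > 1$. Second, $w_j^* = u_j + v_j > 0$: we are working under the hypothesis of the surrounding argument that $(\bm u, \bm v)$ is a positive equilibrium of \eqref{patch}, so all components $u_j, v_j$ are strictly positive. Multiplying a strictly positive $w_j^*$ by a factor strictly greater than $1$ therefore yields $\frac{d+q}{d} w_j^* > w_j^*$, and chaining with the inequality from Lemma \ref{lemma:inequality} gives $w_i^* \geq \frac{d+q}{d} w_j^* > w_j^*$, as claimed.

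There is essentially no obstacle here. The only subtle point worth flagging explicitly in the write-up is the appeal to $w_j^* > 0$, which relies on the positivity of the putative equilibrium being examined in Section~4; without that, Lemma \ref{lemma:inequality} only delivers $w_i^* \geq w_j^*$. Given the brevity, I would present the corollary as a one-line deduction, leaving Lemma \ref{lemma:inequality} to do all the real work.
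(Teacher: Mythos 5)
Your proof is correct and follows the same route the paper intends: the corollary is stated as a direct consequence of Lemma \ref{lemma:inequality}, obtained by setting $f(i)-f(j)=1$ and then using $(d+q)/d>1$ together with the positivity of $w_j^*$ (which holds since $(\bm u,\bm v)$ is a positive equilibrium) to sharpen the inequality to a strict one. Your explicit flagging of the $w_j^*>0$ step is a minor but accurate addition to what the paper leaves implicit.
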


\begin{corollary}\label{cor:path}
    If there is a path from node $i$ to node $j$, i.e. there exist nodes $i_1,i_2,\dots,i_h$ such that $i - (or\to) i_1 - (or\to) \dots - (or\to) i_h - (or\to) j$, then
    \[
    w_i^* \geq \bigg(\frac{d+q}{d}\bigg)^{f(i)-f(j)}w_j^*. 
    \]
    The equality happens when all edges are undirectedly  connected.
\end{corollary}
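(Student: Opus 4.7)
The plan is to obtain Corollary \ref{cor:path} by iterating Lemma \ref{lemma:inequality} along the path, as a straightforward telescoping estimate. I would relabel $i_0 := i$ and $i_{h+1} := j$ so that the path has consecutive vertices $i_0, i_1, \ldots, i_{h+1}$, with every consecutive pair $(i_\ell, i_{\ell+1})$ connected by either an undirected edge $i_\ell - i_{\ell+1}$ or a directed edge $i_\ell \to i_{\ell+1}$. Each such pair therefore falls into the hypothesis of Lemma \ref{lemma:inequality}.

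First, for each $\ell = 0, 1, \ldots, h$, I would apply Lemma \ref{lemma:inequality} to the pair $(i_\ell, i_{\ell+1})$ to conclude $w_{i_\ell}^* \geq \left( \frac{d+q}{d} \right)^{f(i_\ell) - f(i_{\ell+1})} w_{i_{\ell+1}}^*$. Multiplying these $h+1$ inequalities together, the exponents sum telescopically to $\sum_{\ell=0}^{h} \bigl( f(i_\ell) - f(i_{\ell+1}) \bigr) = f(i_0) - f(i_{h+1}) = f(i) - f(j)$, which yields the desired bound $w_i^* \geq \left( \frac{d+q}{d} \right)^{f(i) - f(j)} w_j^*$.

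For the equality statement, I would trace through the construction of the sign pattern graph: a directed edge, whether its endpoints are oriented upstream-to-downstream or downstream-to-upstream, was defined by a strict inequality of the form $(d+q) w_a^* > d w_b^*$, so the corresponding single-edge bound from Lemma \ref{lemma:inequality} is strict; an undirected edge was defined by the equality $(d+q) w_a^* = d w_b^*$, so the corresponding bound is an equality. Consequently, equality in the chained product is equivalent to equality in every single-edge step, which forces every edge on the path to be undirected.

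I do not anticipate any genuine obstacle here; the argument is a one-shot telescoping of Lemma \ref{lemma:inequality}. The only bookkeeping point is that the exponent $f(i_\ell) - f(i_{\ell+1})$ takes the value $\pm 1$ depending on whether each step moves upstream or downstream, but Lemma \ref{lemma:inequality} already covers both cases uniformly, so no additional case analysis is needed.
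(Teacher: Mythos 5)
Your proposal is correct and matches the paper's argument: the paper likewise applies Lemma \ref{lemma:inequality} repeatedly along the path, telescopes the exponents $f(i)-f(i_1)+f(i_1)-\dots-f(j)$, and notes that equality forces every edge to be undirected since each directed edge corresponds to a strict single-step inequality. Your multiplication of the chained inequalities (valid since all $w^*$ components are positive) is just a cosmetic variant of the paper's successive substitution.
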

\begin{proof}
    We apply Lemma \ref{lemma:inequality} repeatedly
    \begin{align*}
    w_i^* &\geq \bigg(\frac{d+q}{d}\bigg)^{f(i)-f(i_1)}w_{i_1}^*\\
    &\geq \bigg(\frac{d+q}{d}\bigg)^{f(i)-f(i_1)}\bigg(\frac{d+q}{d}\bigg)^{f(i_1)-f(i_2)}w_{i_2}^*\\
    &\vdots \\
    &\geq \bigg(\frac{d+q}{d}\bigg)^{f(i)-f(i_1)+f(i_1)-f(i_2)+\dots+f(i_h)-f(j)}w_j^*\\
    &= \bigg(\frac{d+q}{d}\bigg)^{f(i)-f(j)}w_j^*. 
    \end{align*}
    Since each equality happens when the corresponding edge is $-$, the overall equality happens when all the edges in the path are $-$.
\end{proof}
\begin{corollary}\label{cor:cycle}
If there is a cycle $i - (or\to) i_1 - (or\to) \dots - (or\to) i_h - (or\to) i$, then all the edges in the cycle must be $-$.
\end{corollary}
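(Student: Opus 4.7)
The plan is to apply Corollary \ref{cor:path} to the closed walk given by the cycle itself, treating it as a path from node $i$ back to node $i$. The telescoping of the exponents in the accumulated factor $\left(\frac{d+q}{d}\right)^{f(i)-f(i_1)+f(i_1)-f(i_2)+\cdots+f(i_h)-f(i)}$ collapses everything to $\left(\frac{d+q}{d}\right)^0 = 1$. Consequently, Corollary \ref{cor:path} yields $w_i^* \geq w_i^*$, which forces equality throughout the entire chain of inequalities used to derive it.

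The next step is to exploit the equality case. Corollary \ref{cor:path} is built by iterating Lemma \ref{lemma:inequality} along the edges of the walk, so the final inequality is a product (with positive weights) of the individual inequalities $w_{i_k}^* \geq \left(\frac{d+q}{d}\right)^{f(i_k)-f(i_{k+1})} w_{i_{k+1}}^*$, one per edge. Since $w_j^* > 0$ at a positive equilibrium and the multiplicative factors are positive, the telescoped inequality $w_i^* \geq w_i^*$ reduces to equality only if every individual step is an equality. Inspecting Lemma \ref{lemma:inequality}, equality in $w_i^* \geq \left(\frac{d+q}{d}\right)^{f(i)-f(j)} w_j^*$ holds if and only if the edge between $i$ and $j$ is undirected, since the definition of a directed edge ($i\to j$ or $j\to i$) involves a strict inequality in the net-flow relation. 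Hence every edge in the cycle must be of type $-$.

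I do not anticipate any real obstacle here, as the corollary is essentially a direct exploitation of the sharpness case in Corollary \ref{cor:path}. The only small care needed is to make explicit the fact that strict inequality in any single step of the chain would be preserved under multiplication by positive constants and propagated to a strict inequality $w_i^* > w_i^*$, which is impossible; this is what rules out any directed edge appearing in the cycle.
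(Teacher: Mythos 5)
Your proposal is correct and follows essentially the same route as the paper, which proves the corollary by setting $j=i$ in Corollary \ref{cor:path} and invoking its equality case. Your only addition is to spell out explicitly that strict inequality in any single application of Lemma \ref{lemma:inequality} (i.e.\ any directed edge) would propagate to the impossible strict inequality $w_i^* > w_i^*$, which is a welcome clarification of the equality-case argument the paper leaves implicit.
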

\begin{proof}
    The proof follows directly from Corollary \ref{cor:path} where we set $j=i$.
\end{proof}

\begin{lemma}\label{lemma:mostupstream}
    For any $i$ such that $f(i)=0$  (i.e. the most upstream nodes) and $r_i>0$, we have $w_i^* \leq k$. 
\end{lemma}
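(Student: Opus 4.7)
The plan is to argue by contradiction, exploiting the sign-pattern graph machinery already assembled. Suppose for contradiction that some level-$0$ node $i_0$ with $r_{i_0}>0$ satisfies $w_{i_0}^*>k$. Then the sign assigned to $i_0$ is $(-)$, so Lemma \ref{lemma:edge} forces at least one incoming directed edge at $i_0$. The key structural input I would use is Corollary \ref{cor:cycle}: the directed edges of the sign-pattern graph contain no cycles, so, restricted to its arrows, the graph is a finite DAG.

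Next I would consider the set $A$ of all nodes admitting a path into $i_0$ consisting entirely of directed arrows, with $i_0$ itself included via the trivial path. Because $A$ is finite, closed under taking incoming directed edges, and the directed sub-digraph is acyclic, $A$ must contain a ``source'' node $m$ with no incoming directed edge. Since $i_0$ itself has an incoming directed edge, $m\neq i_0$, and the path from $m$ to $i_0$ begins with a genuine outgoing directed arrow at $m$ and consists only of directed edges. Applying Corollary \ref{cor:path} to this path yields the \emph{strict} inequality
\[
w_m^* \;>\; \left(\frac{d+q}{d}\right)^{f(m)-f(i_0)} w_{i_0}^* \;=\; \left(\frac{d+q}{d}\right)^{f(m)} w_{i_0}^*,
\]
and since $f(m)\geq 0$ and $w_{i_0}^*>k$, this gives $w_m^*>k$.

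Once $w_m^*>k$ is established, the sign of $m$ is either $(-)$ (when $r_m+s_m>0$) or $(0^-)$ (when $r_m=s_m=0$, which is in particular possible when $f(m)>0$). In the $(-)$ case, Lemma \ref{lemma:edge} demands at least one incoming directed edge at $m$, contradicting $m$ being a source. In the $(0^-)$ case, Lemma \ref{lemma:edge} requires either that every edge at $m$ be undirected, or that $m$ have both an incoming and an outgoing directed edge; both alternatives fail because $m$ has at least one outgoing arrow yet no incoming one. Either way we reach a contradiction, so $w_{i_0}^*\leq k$.

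The most delicate step is guaranteeing that the source $m$ actually carries an outgoing directed arrow, which is what kills the ``all edges undirected'' escape clause for a $(0^-)$ sign; this is ensured by $m\neq i_0$ together with how $A$ is built. A secondary point worth flagging is that Corollary \ref{cor:path} delivers \emph{strict} rather than merely weak inequality here, precisely because every edge of the traced path is a genuine arrow.
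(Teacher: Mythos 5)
Your argument is correct and is essentially the paper's own proof in a slightly repackaged form: both trace directed arrows backward from the offending level-$0$ node, use Corollary \ref{cor:path} to propagate $w^*>k$ to every ancestor, and get a contradiction from finiteness together with the acyclicity content of Corollary \ref{cor:cycle}. The only difference is cosmetic — you extract a source of the (acyclic) arrow digraph and contradict Lemma \ref{lemma:edge} there, while the paper iterates the backward chain until it is forced into a directed cycle contradicting Corollary \ref{cor:cycle}.
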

\begin{proof}
Assume by contradiction that there exists a node $i$ such that $f(i)=0$, $r_i>0$ and $w_i^* >k$. Then node $i$ has  $(-)$  sign and by Lemma \ref{lemma:edge}, there must exist a node $i_1$ such that  $i_1\to i$. Since node $i$ is a most upstream node, node $i_1$ must be downstream of it and thus by Corollary \ref{cor:updown}, we must have $w_{i_1}^* > w_i^* >k$, thus node $i_1$ has either sign $(-)$ or $(0^-)$.

Since there is already an edge out of node $i_1$, this means there exists a node $i_2$ such that  $i_2 \to i_1$. This gives us a path from $i_2$ to $i$ and thus by Corollary \ref{cor:path} we must have
\[
w_{i_2}^* \geq \bigg(\frac{d+q}{d}\bigg)^{f(i_2)-f(i)}w_i^* = \bigg(\frac{d+q}{d}\bigg)^{f(i_2)}w_i^* >k. 
\]
since $f(i)=0$. Thus again we must have $i_2$ has either sign $(-)$ or $(0^-)$. 

For each index $h\geq 3$, we can repeat the argument to obtain node $i_h$ such that $i_h \to i_{h-1}$ and node $i_h$ has either sign $(-)$ or $(0^-)$. Since the number of nodes is finite, the above process must stop after a finite number of steps. This is only possible if we have a cycle. However, since all edges in this cycle are directed, we reach  a contradiction based on Corollary \ref{cor:cycle}.
\end{proof}

\begin{lemma}\label{lemma_no}
Let $(G,f,L)$ be a homogeneous flow stream network, where $L=dD+qQ$. Let $k>0$ and $\bm r, \bm s>\bm 0$ such that $\sum_{i=1}^n r_i=\sum_{i=1}^n s_i=r>0$.  
Suppose that $r_i=0$ for any node $i$ with $f(i)>0$ and there exists at least one node $i_0$ with $f(i_0)>0$ such that $s_{i_0}>0$. Then a positive equilibrium $E^*$ does not exist.
\end{lemma}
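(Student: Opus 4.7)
The plan follows the two-step strategy of the three-node case (Lemmas \ref{lemma_sign}--\ref{lemma_nonexistence}): first pin down structural information on $\bm w^*:=\bm u+\bm v$, then derive a contradiction from a spectral comparison (Lemma \ref{lemma_sp}) between the two ``competition linearizations''
\[
M_1:=dD+qQ+\text{diag}(r_i(1-w_i^*/k)),\qquad M_2:=dD+qQ+\text{diag}(s_i(1-w_i^*/k)).
\]
Assume, toward contradiction, that $(\bm u,\bm v)$ is a positive equilibrium of \eqref{patch}; since $\bm u$ and $\bm v$ are positive eigenvectors of $M_1$ and $M_2$ respectively for the eigenvalue $0$, Perron--Frobenius yields $s(M_1)=s(M_2)=0$. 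Set $L:=dD+qQ$ and recall that every column of $L$ sums to zero (a direct consequence of \eqref{eq-L}), so the all-ones row vector is a left eigenvector of $L$ with eigenvalue $0$, and $s(L)=0$ with positive right eigenvector $\bm u^*$ (Lemma \ref{lemma vector v}).

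The first step is to rigidify $\bm u$. Because $r_i=0$ whenever $f(i)>0$, the term $r_iu_i(1-w_i^*/k)$ vanishes at every positive-level node, and by Lemma \ref{lemma:mostupstream} it is nonnegative at every level-$0$ node with $r_i>0$. Hence $(L\bm u)_i=-r_iu_i(1-w_i^*/k)\le 0$ for every $i$. Summing and invoking $\bm 1^TL=\bm 0$ forces $(L\bm u)_i=0$ identically, so $w_i^*=k$ at every level-$0$ node with $r_i>0$ and $\bm u\in\ker L$. By the uniqueness of the positive element of $\ker L$ (up to scaling), $\bm u=c\bm u^*$ with $u_i=ck\rho^{f(i)}$ and $\rho:=(d+q)/d$, for some $c\in(0,1)$ (positivity of $v_i=(1-c)k$ at level-$0$ nodes with $r_i>0$ gives $c<1$). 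In particular, $r_i(1-w_i^*/k)\equiv 0$, so $M_1=L$. The second step is the dual inequality: \emph{at every node $i$ with $s_i>0$ and $r_i=0$, $w_i^*\ge k$, with strict inequality at $i_0$}. I would establish this by a chain-of-contradictions argument parallel to Lemma \ref{lemma:mostupstream}, but working with the level-normalized weight $\phi(i):=w_i^*/\rho^{f(i)}$, which by Lemma \ref{lemma:inequality} is strictly decreasing along every directed edge of the sign-pattern graph. If some $(+)$ node exists, choose $i^*$ minimizing $\phi$ over all $(+)\cup(0^+)$ nodes; Lemma \ref{lemma:edge} supplies a directed out-edge $i^*\to j$, the strict decrease of $\phi$ combined with the minimality of $\phi(i^*)$ forces $j\in(-)\cup(0)\cup(0^-)$ (so $w_j^*\ge k$), and Corollary \ref{cor:updown} rules out $j$ being upstream of $i^*$; hence $j$ lies downstream in the $\{w^*\ge k\}$ region. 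Iterating via the in-edges guaranteed at $(-)$ nodes by Lemma \ref{lemma:edge} and applying Corollaries \ref{cor:path}--\ref{cor:cycle} produces a directed cycle in the sign-pattern graph, the forbidden configuration. The strict inequality at $i_0$ follows because equality $w_{i_0}^*=k$ at every $s$-supported node, combined with the rigidity of $\bm u$, would force $L\bm v=\bm 0$ and hence $\bm v\propto\bm u^*$, which is incompatible with $\bm u^*_{i_0}=k\rho^{f(i_0)}>k=w_{i_0}^*$.

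Granting the dual inequality, the spectral contradiction closes the proof: $s_i(1-w_i^*/k)\le 0$ for every $i$ (with equality at $s_i=0$ nodes and at level-$0$ nodes where $w_i^*=k$) and strictly less than zero at $i_0$, so $L-M_2=-\text{diag}(s_i(1-w_i^*/k))$ is entrywise nonnegative and nonzero. Applying Lemma \ref{lemma_sp} with $P=M_2$ and $Q=L-M_2$ yields $s(L)>s(M_2)$, i.e.\ $0>0$, the desired contradiction. The main technical obstacle is the dual sign-pattern step: Lemma \ref{lemma:mostupstream} works because its chain of in-edges is automatically confined to $(-)\cup(0^-)$ nodes at nonnegative level, so a cycle emerges directly from Corollary \ref{cor:cycle}. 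The dual chain, built from a putative interior $(+)$ node, can traverse both upstream and downstream sign-pattern edges; the level-normalized weight $\phi$ is precisely the bookkeeping device that restores strict monotonicity along directed edges, and the alternative in Lemma \ref{lemma:edge} for $(0^+)$ nodes (all edges undirected, or both in- and out-directed edges) requires separate attention but does not alter the overall strategy.
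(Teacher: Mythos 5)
Your reduction is sound everywhere except at the one place where all the work is concentrated. Step 1 (summing the $\bm u$-equations, using Lemma \ref{lemma:mostupstream} and the zero column sums of $L$ to get $w_i^*=k$ at every level-$0$ node with $r_i>0$, and even $L\bm u=\bm 0$, so $\bm u$ is proportional to $(\,((d+q)/d)^{f(i)}\,)_i$) is correct and coincides with the paper's first step; likewise, the closing application of Lemma \ref{lemma_sp} and your rigidity argument ruling out $s_i(1-w_i^*/k)\equiv 0$ would finish the proof \emph{if} the ``dual inequality'' ($w_i^*\ge k$ at every node with $s_i>0$, i.e.\ no $(+)$ node) were established. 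The gap is precisely there. Choosing $i^*$ to minimize $\phi(i)=w_i^*/((d+q)/d)^{f(i)}$ over $\{w^*<k\}$ and passing to an out-neighbour $j$ with $w_j^*\ge k$ is fine, but the proposed iteration ``via the in-edges guaranteed at $(-)$ nodes'' does not close. Lemma \ref{lemma:edge} guarantees \emph{some} directed in-edge at a $(-)$ node, and that in-edge may be the very edge $i^*\to j$ you started with, so already at the first step the chain can return to the $(+)$ node $i^*$, where Lemma \ref{lemma:edge} supplies only an out-edge and the chain terminates without producing a cycle. More generally, the backward chain is not confined to $\{w^*\ge k\}$: by Corollary \ref{cor:path} a chain node $p$ only satisfies $w_p^*>((d+q)/d)^{f(p)-f(j)}w_j^*$, and since the root $j$ has positive level, upstream chain nodes can have $f(p)<f(j)$, making the factor less than $1$, so $w_p^*<k$ is not excluded. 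If the chain reaches such a node with $s_p>0$ (a $(+)$ node), no in-edge is guaranteed, no cycle is forced, and minimality of $\phi(i^*)$ gives no contradiction either, since one only knows $\phi(p)>\phi(j)$, which is perfectly compatible with $\phi(p)\ge\phi(i^*)$. So the central claim is left unproved.

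Note how the paper sidesteps this: its chains are always rooted at a \emph{level-$0$} node (the $(0)$ node with $w^*=k$ obtained in step 1), so along any directed path into the root, Corollary \ref{cor:path} gives $w^*\ge((d+q)/d)^{f(\cdot)}k\ge k$, the chain stays among $(-)/(0^-)$ nodes, and Corollary \ref{cor:cycle} yields the contradiction. From this the paper concludes that the root has no directed in-edge, hence all its edges are undirected; it then propagates this by strong connectivity to show \emph{every} edge of the sign-pattern graph is undirected, forcing $w_i^*=k((d+q)/d)^{f(i)}$ for all $i$, after which a single summation of all $u_i$- and $v_i$-equations, using $s_{i_0}>0$ with $f(i_0)>0$, gives the contradiction directly --- no spectral comparison is needed in the $n$-node case. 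If you wish to keep your Lemma \ref{lemma_sp} ending, you still need a global argument of this type (or some other device) to obtain $w_i^*\ge k$ at the positive-level $s$-supported nodes; the local extremal-chain argument as described does not deliver it.
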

\begin{proof}
Suppose by contradiction that a positive equilibrium $E^*$ exists. From the assumption on $\bm r$, there must exists a node $i$ with $f(i)=0$ and $r_i>0$. Taking the sum of all $du_i/dt$,  at the positive equilibrium, we must have
\begin{align}\label{eq:sum_u}
0 = \sum_{i:f(i)=0, r_i>0} r_i u_i\bigg(1-\frac{w_i^*}{k}\bigg).
\end{align}
From Lemma \ref{lemma:mostupstream}, for any $i$ such that $f(i)=0$ and $r_i>0$ we must have $w_i^*\leq k$. From this fact and equation \eqref{eq:sum_u}, we have $w_i^*=k$ for any $i$ such that $f(i)=0$ and $r_i>0$. 

 Without loss of generality, suppose that $f(1)=0$ and $r_1>0$. From the argument above, $w_1^*=k$ and node $1$ has sign $(0)$. If there is a node $j$ adjacent to node $1$ such that $j\to 1$, then repeating the argument in Lemma \ref{lemma:mostupstream}, we have a cycle which leads to a contradiction. Since there is no directed edge into node $1$, from Lemma \ref{lemma:edge}, all edges connected to node $1$ must be undirected. Let $j_1,\dots,j_h$ denote the nodes adjacent to node $1$.  Then from Corollary \ref{cor:updown}, they must have sign $(-)$ or $(0^-)$.

We will show again that all edges connected to node $j_1$ must be undirected, and a similar argument can be applied to show all edges connected to nodes $j_1,\dots,j_h$ are undirected. Assume by contradiction that not all edges connected to node $j_1$ are undirected. Since node $j_1$ has sign $(-)$ or $(0^-)$, from Lemma \ref{lemma:edge} there must be a directed edge into node $j_1$ from another node $j'$. However, that means there is a path $j' \to j_1 - 1$. From Corollary \ref{cor:path} we have
\[
w_{j'}^* >\bigg(\frac{d+q}{d}\bigg)^{f(j')-f(1)} w^*_1 = \bigg(\frac{d+q}{d}\bigg)^{f(j')} w^*_1\geq w^*_1 = k,
\]
and thus node $j'$ has sign $(-)$ or $(0^-)$ and there is a directed edge into it. We repeat the argument in Lemma 4.11, which leads to a cycle and thus a contradiction.

The same argument can be repeated, and since the stream network is strongly connected, we have all edges in the sign pattern graph must be undirected. Thus all nodes aside from the most upstream nodes must have sign $(-)$ or $(0^-)$. Since there exists $i_0$ such that $f(i_0)>0$ and $s_{i_0}>0$, there must be at least one node with sign $(-)$. Taking the sum of all equations for $u_i$ and $v_i$, we have
\[
0=\sum_{i=1}^n (r_iu_i+s_iv_i)\bigg(1-\frac{w_i^*}{k}\bigg).
\]
However, since there is at least one node with sign $(-)$ and no node with sign $(+)$, the right hand side of the equation above must be strictly negative, which is a contradiction. Thus a positive equilibrium $E^*$ does not exists.
\end{proof}

The proof of the following result is similar to that of Theorem \ref{theorem:biomass_n}. We include it here for the sake of completeness. 
\begin{lemma}\label{lemma_len}
   Let $(G,f,L)$ be a homogeneous flow stream network, where $L=dD+qQ$. Suppose $\bm k=(k, \dots, k)$ with $k>0$ and $\bm s>\bm 0$.   Let $\bm v^*=(v^*_1, \dots,  v^*_n)$ be the positive equilibrium of \eqref{DS-single}.
   If there exists at least one node $i_0$ with  $f(i_0)>0$ such that $s_{i_0}>0$, 
   then $v^*_i<k$ for all node $i$ with $f(i)=0$. 
\end{lemma}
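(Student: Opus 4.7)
My plan is to adapt the sign-pattern machinery from Lemmas~\ref{lemma:mostupstream} and~\ref{lemma_no} (originally developed for the two-species equilibrium $\bm w^*=\bm u^*+\bm v^*$) to the present single-species equilibrium $\bm v^*$. For each node $i$ I assign a sign in $\{(+),(-),(0),(0^+),(0^-)\}$ based on the comparison of $v_i^*$ with $k$ together with whether $s_i>0$ or $s_i=0$, and orient each edge between adjacent nodes using the flow rule comparing $(d+q)v_i^*$ with $dv_j^*$. Because the reaction term $s_iv_i^*(1-v_i^*/k)$ at an equilibrium plays the same sign-carrying role that $(r_iu_i+s_iv_i)(1-w_i^*/k)$ did there, Lemma~\ref{lemma:edge}, Lemma~\ref{lemma:inequality}, and Corollaries~\ref{cor:updown}--\ref{cor:cycle} transfer verbatim to the $\bm v^*$ setting.

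Step~1 (no upstream node has $v_i^*>k$). Suppose, for contradiction, that $v_{i^*}^*>k$ at some $i^*$ with $f(i^*)=0$. If $s_{i^*}>0$, then $i^*$ has sign $(-)$ and the argument of Lemma~\ref{lemma:mostupstream} applies directly: Lemma~\ref{lemma:edge} produces a chain $\cdots\to i_2\to i_1\to i^*$ of directed edges along nodes with $v^*>k$ (by Corollary~\ref{cor:path}), and finiteness of the graph forces this chain to close into a directed cycle, contradicting Corollary~\ref{cor:cycle}. If $s_{i^*}=0$ (sign $(0^-)$), either Lemma~\ref{lemma:edge} already supplies a directed incoming edge at $i^*$ and the chain construction proceeds as above, or all edges at $i^*$ are undirected, in which case Lemma~\ref{lemma:inequality} passes $v^*>k$ to every downstream neighbor. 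Iterating this undirected propagation, it cannot reach every node of the network, because the hypothesis node $i_0$ (with $f(i_0)>0$ and $s_{i_0}>0$) would then have $v_{i_0}^*>k$, sign $(-)$, and hence a required directed incoming edge---violating ``all undirected.'' So somewhere along the propagation a directed edge must appear, at which point the chain-to-cycle argument closes as in the $(-)$ case.

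Step~2 (no upstream node has $v_i^*=k$). Now suppose $v_{i^*}^*=k$ at some $i^*$ with $f(i^*)=0$, so $i^*$ has sign $(0)$. If $i^*$ had any directed edge, Lemma~\ref{lemma:edge} would force a directed incoming edge $m\to i^*$, Corollary~\ref{cor:updown} would give $v_m^*>v_{i^*}^*=k$, and starting the Step~1 chain at $m$ would yield a directed cycle---contradiction. Hence every edge at $i^*$ is undirected, and each adjacent downstream neighbor $j$ satisfies $v_j^*=\frac{d+q}{d}k>k$. Any directed edge at such $j$ would again have to be incoming: a level-$0$ supplier $m$ would produce $v_m^*>k$, violating Step~1, while a level-$2$ supplier would launch the Step~1 chain-to-cycle iteration. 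So every edge at $j$ is undirected too, and iterating plus the irreducibility of $L$ forces every edge of the sign-pattern graph to be undirected. Corollary~\ref{cor:path} with equality then yields $v_i^*=k\bigl(\frac{d+q}{d}\bigr)^{f(i)}=k\phi_i$ at every node $i$, where $\bm\phi$ is the positive null eigenvector of $L=dD+qQ$ from Lemma~\ref{lemma vector v}.

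Step~3 (terminal contradiction). At node $i_0$ the equilibrium equation reads
\[
0=\sum_{j=1}^n L_{i_0j}v_j^*+s_{i_0}v_{i_0}^*\bigl(1-v_{i_0}^*/k\bigr).
\]
Substituting $\bm v^*=k\bm\phi$, the movement term becomes $k(L\bm\phi)_{i_0}=0$, while the reaction term $s_{i_0}k\phi_{i_0}(1-\phi_{i_0})$ is strictly negative since $s_{i_0}>0$ and $\phi_{i_0}=\bigl(\frac{d+q}{d}\bigr)^{f(i_0)}>1$. The sum cannot vanish, giving the desired contradiction and completing the proof. The main obstacle is the $(0^-)$ variant in Step~1: when the upstream violator $i^*$ itself offers no directed edge to anchor the chain, one has to locate the first directed edge encountered by the undirected wave emanating from $i^*$ and launch the cycle construction from there, using the hypothesis on $i_0$ to guarantee that the wave cannot swallow the whole network.
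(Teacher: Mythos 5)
Your proposal is correct, but it takes a genuinely different route from the paper. The paper proves Lemma \ref{lemma_len} dynamically: it takes $\bar{\bm u}=k\bm v$, where $\bm v$ is the positive null eigenvector of $L=dD+qQ$ from Lemma \ref{lemma vector v}, observes that $\bar{\bm u}$ is a strict supersolution because the reaction term at $i_0$ is strictly negative ($s_{i_0}>0$, $v_{i_0}>1$), and then uses the strongly monotone semiflow together with uniqueness of the positive equilibrium (Lemma \ref{DS-single} and \cite[Prop.\ 3.2.1]{smith2008monotone}) to conclude $\bm v^*\ll k\bm v$, which gives $v_i^*<k$ at level-$0$ nodes and in fact the stronger pointwise bound $v_i^*<k\bigl(\frac{d+q}{d}\bigr)^{f(i)}$ at every node. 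You instead argue purely at the equilibrium level, transplanting the sign-pattern machinery of Lemmas \ref{lemma:edge}--\ref{lemma:mostupstream} and Lemma \ref{lemma_no} to the single-species equation: the reaction $s_iv_i^*(1-v_i^*/k)$ indeed carries the same sign information as before, so the edge lemma and Corollaries \ref{cor:path}--\ref{cor:cycle} do transfer, your Step~1 extends Lemma \ref{lemma:mostupstream} to cover level-$0$ nodes with $s_i=0$, and your Step~2 mirrors the all-edges-undirected argument of Lemma \ref{lemma_no}, forcing $\bm v^*=k\bm\phi$ and a contradiction at $i_0$. The trade-off: the paper's proof is shorter, leans on established monotone-systems theory, and yields the bound at all nodes, whereas yours is self-contained algebra on the equilibrium (needing only existence of $\bm v^*$) and recycles machinery the paper already built, at the cost of length and of two steps that you state informally --- the ``undirected wave'' in Step~1 and the iteration in Step~2 are cleanest if phrased via the undirected-connected component $S$ of the offending node: every node of $S$ satisfies $v_j^*=\bigl(\frac{d+q}{d}\bigr)^{f(j)}v_{i^*}^*\ge v_{i^*}^*$ by the equality case of Corollary \ref{cor:path}, so any directed edge incident to $S$ yields (via Lemma \ref{lemma:edge}, which for these signs supplies an \emph{incoming} directed edge) the chain-to-cycle contradiction, and otherwise irreducibility forces $S$ to be the whole network; these completions are routine and at the same level of detail as the paper's own ``the same argument can be repeated'' in Lemma \ref{lemma_no}.
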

\begin{proof}
Since $L=dD+qQ$ is essentially nonnegative and irreducible, by  \cite[Theorem 4.1.1]{smith2008monotone}, the solutions of \eqref{patch} induce a strongly monotone dynamical system: if $\bm u_1(0)>\bm u_2(0)$ then the corresponding solutions satisfy  $\bm u_1(t)\gg \bm u_2(t)$ for all $t>0$. By Lemma \ref{lemma vector v}, $dD+qQ$ has a positive eigenvector $\bm v=(v_1, \dots, v_n)$ such that $v_i=1$ if $f(i)=0$ for all $i=1, \dots, n$. Moreover, $v_i>1$ if $f(i)>0$. 
Define $\bar{\bm u}=k\bm v$. Since $\bar {\bm u}$ is an eigenvector of $dD+qQ$ corresponding to eigenvalue 0,  we have
$$
0\ge s_i\bar u_i\left(1-\frac{\bar u_i}{k}\right)= \sum_{j=1}^n (dD_{ij}+qQ_{ij})\bar u_j+s_i\bar u_i\left(1-\frac{\bar u_i}{k}\right), \ \ i=1,\dots, n.
$$
Moreover, the $i_0$-th inequality is strict since $s_{i_0}>0$ and $v_{i_0}>1$. Hence, the solution $\bm u(t)$ of \eqref{patch} with initial condition $\bm u(0)=\bm {\bar u}$ is  strictly decreasing and converges to  an equilibrium \cite[Proposition 3.2.1]{smith2008monotone}, which is the positive equilibrium $\bm v^*$ by Lemma \ref{DS-single}. Hence, $\bm v^*\ll \bar {\bm u}$. In particular, $v_i^*<k$ if $f(i)=0$.
\end{proof}

We are now ready to prove that $E_2$ is unstable. 
\begin{lemma}\label{lemma_les}
   Let $(G,f,L)$ be a homogeneous flow stream network, where $L=dD+qQ$. Let $k>0$ and $\bm r, \bm s>\bm 0$ such that $\sum_{i=1}^n r_i=\sum_{i=1}^n s_i=r>0$.  
   Suppose that $r_i=0$ for any node $i$ with $f(i)>0$ and there exists at least one node $i_0$ with  $f(i_0)>0$ such that $s_{i_0}>0$.
   Then the semitrivial equilibrium $E_2=(\bm 0, \bm v^*)$ of \eqref{patch} is unstable  and the semitrivial equilibrium $E_1=(\bm u^*, \bm 0)$ of \eqref{patch} is stable. 
\end{lemma}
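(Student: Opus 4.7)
The proof will follow closely the structure of Lemma \ref{lemma_unstable} in the three-node case, using the spectral monotonicity tool in Lemma \ref{lemma_sp} together with the explicit information we already have about $\bm u^*$ and $\bm v^*$.

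First, I will treat the instability of $E_2=(\bm 0, \bm v^*)$. The linear stability of $E_2$ is governed by the sign of $\lambda_1(\bm r, \bm v^*)$, the principal eigenvalue of $M_2:=dD+qQ+\text{diag}\bigl(r_i(1-v^*_i/k)\bigr)$. Since $s(dD+qQ)=0$, it is enough to show that the nonnegative, nonzero diagonal perturbation $\text{diag}(r_i(1-v^*_i/k))$ is not identically zero. By hypothesis $r_i=0$ whenever $f(i)>0$, so these entries are zero there; at each node $i$ with $f(i)=0$, Lemma \ref{lemma_len} gives $v^*_i<k$, so $r_i(1-v^*_i/k)\ge 0$ at every node, with strict inequality at any most-upstream node where $r_i>0$ (such a node exists because $\sum r_i=r>0$ and $\bm r$ is supported on level-$0$ nodes). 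Applying Lemma \ref{lemma_sp} with $P=dD+qQ$ and $Q$ equal to this diagonal matrix then yields $\lambda_1(\bm r, \bm v^*)>s(dD+qQ)=0$, so $E_2$ is unstable.

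For the stability of $E_1=(\bm u^*, \bm 0)$, I argue analogously. The sign of $\lambda_1(\bm s, \bm u^*)$, the principal eigenvalue of $M_1:=dD+qQ+\text{diag}\bigl(s_i(1-u^*_i/k)\bigr)$, determines the stability. By Theorem \ref{theorem:biomass_n}, since $\bm r$ is concentrated at the level-$0$ nodes, the equilibrium $\bm u^*$ has the explicit form $u^*_i=k\bigl(\tfrac{d+q}{d}\bigr)^{f(i)}$. Consequently $s_i(1-u^*_i/k)=0$ if $f(i)=0$, and $s_i(1-u^*_i/k)\le 0$ if $f(i)>0$, with strict inequality at $i_0$ since $s_{i_0}>0$. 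To apply Lemma \ref{lemma_sp} in the opposite direction, I would write $dD+qQ=M_1+E$, where $E:=-\text{diag}(s_i(1-u^*_i/k))$ is nonnegative and nonzero; irreducibility of $dD+qQ$ then yields $0=s(dD+qQ)>s(M_1)=\lambda_1(\bm s, \bm u^*)$, so $E_1$ is linearly stable.

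The main subtlety — really the only one — is verifying that the hypotheses of Lemma \ref{lemma_sp} are met, in particular the positivity of the diagonal perturbations at the right nodes and the irreducibility of the relevant sum. Both reduce to the structural assumptions: $\bm r$ concentrates on level-$0$ nodes where $\bm v^*$ is strictly sub-carrying-capacity by Lemma \ref{lemma_len}, and $\bm s$ has a positive component on some level-$>0$ node where $\bm u^*$ exceeds $k$ by the explicit formula. The irreducibility of $dD+qQ$ is given by the homogeneous flow stream network assumption, and adding a diagonal matrix preserves irreducibility. This is a routine verification and I expect no obstacle beyond it.
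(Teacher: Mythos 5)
Your proposal is correct, and for the stability of $E_1$ it coincides with the paper's argument verbatim: use Theorem \ref{theorem:biomass_n} to get $u_i^*=k\left(\frac{d+q}{d}\right)^{f(i)}$, note $s_i(1-u_i^*/k)\le 0$ with strict inequality at $i_0$, and apply Lemma \ref{lemma_sp}. The only divergence is in the instability of $E_2$: the paper does not invoke Lemma \ref{lemma_sp} there, but instead takes a positive eigenvector $\bm\varphi$ of $dD+qQ+\mathrm{diag}(r_i(1-v_i^*/k))$ and sums the eigenvalue equations, using that every column of $dD+qQ$ sums to zero, to obtain $\lambda_1\sum_i\varphi_i=\sum_{i:f(i)=0}r_i(1-v_i^*/k)\varphi_i>0$ via Lemma \ref{lemma_len}. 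Your route instead observes that $\mathrm{diag}(r_i(1-v_i^*/k))$ is a nonnegative, nonzero diagonal perturbation of $dD+qQ$ (nonnegativity at level-$0$ nodes from Lemma \ref{lemma_len}, zero elsewhere since $r_i=0$ when $f(i)>0$) and applies Lemma \ref{lemma_sp}; this is equally valid, needs exactly the same input (Lemma \ref{lemma_len}), and is in fact the argument the paper itself uses in the three-node analogue (Lemma \ref{lemma_unstable}), so nothing is lost or gained beyond a slightly more uniform treatment of the two equilibria. Your verification of the hypotheses of Lemma \ref{lemma_sp} (existence of a level-$0$ node with $r_i>0$, irreducibility preserved under diagonal perturbation) is the right bookkeeping and closes the argument.
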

\begin{proof}
To see that $E_2$ is unstable,   it suffices to show $\la_1:=\la_1(\bm{r}, \bm v^*)>0$, where $\la_1(\bm{r}, \bm v^*)$ is the principal eigenvalue of 
the matrix $dD+qQ+\text{diag}(r_i(1-v^*_i/k))$. Let $\bm\varphi=(\varphi_1, \dots, \varphi_n)$ be a positive eigenvector corresponding with $\lambda_1$. Then,
   $$
   \lambda_1 \varphi_i=\sum_{j=1}^n (dD_{ij}+qQ_{ij})\varphi_j+r_i\left(1-\frac{v_i^*}{k}\right)\varphi_i.
   $$ 
   Adding up all the equations and noticing that each column sum of $dD+qQ$ is zero, we obtain 
   \begin{eqnarray*}
  \lambda_1 \sum_{i=1}^n \varphi_i&=& \sum_{i=1}^n r_i\left(1-\frac{v_i^*}{k}\right)\varphi_i \\
  &=& \sum_{i:\ f(i)=0} r_i\left(1-\frac{v_i^*}{k}\right)\varphi_i,
   \end{eqnarray*}
   where we used the assumption that $r_i=0$ if $f(i)>0$ in the last step. By Lemma \ref{lemma_len}, $v_i^*<k$ if $f(i)=0$. Therefore, we have $\lambda_1>0$.

 The stability of $E_1$ is determined by the sign of  $\la_1(\bm{s}, \bm u^*)$, which is the principal eigenvalue of  $dD+qQ+\text{diag}(s_i(1-u^*_i/k))$. By Theorem \ref{theorem:biomass_n}, we have $ u_i^*=k(\frac{d+q}{d})^{f(i)}$. So we have
$$
s_i\left(1-\frac{u^*_i}{k}\right)=0, \ \forall i \text{ such that } f(i)=0
$$
and 
$$
s_i\left(1-\frac{u^*_i}{k}\right)\le 0, \ \forall i \text{ such that } f(i)>0,
$$
with at least one strict sign due to the assumption on $\bm s$. Therefore, by Lemma \ref{lemma_sp}, we must have $\la_1(\bm{s}, \bm u^*)<s(dD+qQ)=0$. Hence, $E_1$ is  stable. 
\end{proof}

Finally, Theorem \ref{theorem_main} follows from Lemmas \ref{M1}, \ref{lemma_no}, and \ref{lemma_les}.

 \section{Appendix}

\subsection{Results on configuration (ii)}

For configuration (ii) we show that if $\bm r=(r_1, r_2, 0)$ with $r_1+r_2=r$ and $\bm s=(s_1, s_2, s_3)$ with $\sum_{i=1}^3 s_i=r$ and $s_3\neq 0$ then $E_1$ is globally asymptotically stable.

\begin{lemma}\label{lemma_signa}
    Suppose that $D$ and $Q$ are given by \eqref{DQ1}. Let $\bm r=(r_1, r_2, 0)>\bm 0$ and $\bm s=(s_1, s_2, s_3)>\bm 0$ such that $\sum_{i=1}^2 r_i=\sum_{i=1}^3 s_i=r$ and $s_3>0$. If $(\bm u, \bm v)$ is a positive equilibrium of \eqref{patch}, then $u_1+v_1<k$, $u_2+v_2<k$, and $u_3+v_3>k$.
\end{lemma}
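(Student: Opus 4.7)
The plan is to adapt the contradiction argument used in Lemma \ref{lemma_sign}. Writing $w_i := u_i + v_i$ and adding the $u_i$- and $v_i$-equations of the equilibrium system, and using $r_3 = 0$, yields
\begin{align*}
0 &= -(d+q) w_1 + d w_3 + (r_1 u_1 + s_1 v_1)(1 - w_1/k),\\
0 &= -(d+q) w_2 + d w_3 + (r_2 u_2 + s_2 v_2)(1 - w_2/k),\\
0 &= (d+q)(w_1 + w_2) - 2 d w_3 + s_3 v_3 (1 - w_3/k).
\end{align*}
The coefficients $r_1 u_1 + s_1 v_1$ and $r_2 u_2 + s_2 v_2$ are strictly positive since $r_1, r_2 > 0$, and $s_3 v_3 > 0$ by assumption, so each equation usefully ties the sign of $1 - w_i/k$ to a flow inequality between adjacent patches.

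First I would prove $w_3 > k$ by contradiction. If $w_3 \le k$, the third equation gives $(d+q)(w_1 + w_2) \le 2 d w_3 \le 2 d k$. Should $w_1 > k$, the first equation would force $(d+q) w_1 < d w_3 \le d k$, contradicting $w_1 > k$; hence $w_1 \le k$, and by symmetry $w_2 \le k$. The equality $w_1 = k$ is excluded too, because it would force $w_3 = (d+q)k/d > k$ via the first equation; hence $w_1 < k$ and $w_2 < k$. But then the first and second equations give $(d+q) w_1 > d w_3$ and $(d+q) w_2 > d w_3$, which sum to contradict $(d+q)(w_1 + w_2) \le 2 d w_3$.

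Knowing $w_3 > k$, the third equation becomes $(d+q)(w_1 + w_2) > 2 d w_3$, and I would then show $w_1 < k$ (the bound $w_2 < k$ following by the symmetric argument). Suppose $w_1 \ge k$; the first equation forces $(d+q) w_1 \le d w_3$, and feeding this into the strict inequality yields $(d+q)(w_1 + w_2) > 2 d w_3 \ge 2(d+q) w_1$, so $w_2 > w_1 \ge k$. The second equation then gives the strict $(d+q) w_2 < d w_3$; summing with $(d+q) w_1 \le d w_3$ produces $(d+q)(w_1 + w_2) < 2 d w_3$, again contradicting the third equation. The main obstacle is bookkeeping the strict versus non-strict inequalities around the boundary cases $w_i = k$ so that each contradiction actually closes; once these cases are handled, the summed-equation argument goes through cleanly, exactly parallel to Lemma \ref{lemma_sign}, and the conclusion $w_1, w_2 < k < w_3$ follows.
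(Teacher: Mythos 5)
Your overall strategy is the same as the paper's: add the $u_i$- and $v_i$-equations to obtain the three relations \eqref{eq:2} for $w_i=u_i+v_i$, and close contradictions by comparing the net flows $(d+q)w_1,(d+q)w_2$ with $dw_3$; your treatment of the boundary cases $w_i=k$ is fine. The genuine problem is the sentence ``the coefficients $r_1u_1+s_1v_1$ and $r_2u_2+s_2v_2$ are strictly positive since $r_1,r_2>0$.'' In this paper $\bm r=(r_1,r_2,0)>\bm 0$ means only $\bm r\ge\bm 0$ with $\bm r\neq\bm 0$ (see the notation in Section 2), so one of $r_1,r_2$ may vanish, and likewise $s_1$ or $s_2$ may vanish; hence $r_iu_i+s_iv_i$ can be zero for $i=1$ or $i=2$. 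This is exactly why the paper's own proof contains the separate case $r_1=s_1=0$, in which the first equation forces the equality $(d+q)w_1=dw_3$ rather than a strict inequality. As written, your step-1 claim that \emph{both} $(d+q)w_1>dw_3$ and $(d+q)w_2>dw_3$ hold, and your step-2 claim that $(d+q)w_2<dw_3$ is strict, are false for admissible data such as $r_1=s_1=0$, $r_2=r$, so the proposal does not cover all cases allowed by the hypotheses.

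The repair is short, and with it your summation argument closes in all cases (arguably more cleanly than the paper's WLOG case analysis). Since $r_1+r_2=r>0$, at least one of the two upstream coefficients $r_iu_i+s_iv_i$ ($i=1,2$) is strictly positive; so in step 1, after you have $w_1<k$ and $w_2<k$, at least one of the inequalities $(d+q)w_i\ge dw_3$ is strict, and their sum still strictly contradicts $(d+q)(w_1+w_2)\le 2dw_3$. In step 2 you need no strictness at all: $w_3>k$ and $s_3v_3>0$ already give the strict inequality $(d+q)(w_1+w_2)>2dw_3$ from the third equation, while the first two equations give the non-strict bounds $(d+q)w_1\le dw_3$ and $(d+q)w_2\le dw_3$ whenever $w_1\ge k$, $w_2\ge k$, whatever the coefficients are; summing these already contradicts the strict inequality. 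State the hypotheses correctly, invoke strictness only where it is actually available, and your proof is complete.
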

\begin{proof}
Suppose that $(\bm u, \bm v)$ is a positive equilibrium of \eqref{patch}. Let $w_i:=u_i+v_i$ for $i=1, 2, 3$. Then, we have
\begin{eqnarray}
&&-(d+q)w_1+dw_3+(r_1u_1+s_1v_1)\left(1-\frac{w_1}{k} \right)=0, \nonumber\\
&&-(d+q)w_2+dw_3+(r_2u_2+s_2v_2)\left(1-\frac{w_2}{k} \right)=0, \label{eq:2}\\
&&(d+q)w_1-dw_3+(d+q)w_2-dw_3+(r_3u_3+s_3v_3)\left(1-\frac{w_3}{k} \right)=0. \nonumber
\end{eqnarray}
Assume to the contrary that $w_3\le k$. Then by the third equation of \eqref{eq:2}, we have either $(d+q)w_1-dw_3\le 0$ or  $(d+q)w_2-dw_3\le 0$. Without loss of generality, we may assume  $(d+q)w_1-dw_3\le 0$. This implies that $w_1\le dw_3/(d+q)<k$. If $r_1>0$ or $s_1>0$, then 
$$
-(d+q)w_1+dw_3+(r_1u_1+s_1v_1)\left(1-\frac{w_1}{k} \right)>0
$$
which contradicts the first equation of \eqref{eq:2}. If $r_1=s_1=0$, then $r_2>0$ and $(d+q)w_1-dw_3= 0$ by the first equation of \eqref{eq:2}. Then by $w_3\le k$  and the third equation of \eqref{eq:2} again, we have $(d+q)w_2-dw_3\le 0$. By $r_2>0$ and the second equation of \eqref{eq:2}, we have $w_2\ge k$. Therefore,
$$
(d+q)w_2-dw_3\ge (d+q)k-dk>0,
$$
which is a contradiction. Hence, $w_3=u_3+v_3>k$. 

By $w_3>k$, either $(d+q)w_1-dw_3>0$ or $(d+q)w_2-dw_3>0$. Without loss of generality, say $(d+q)w_1-dw_3>0$. Then by the first equation of \eqref{eq:2}, we have $w_1=u_1+v_1<k$. Suppose to the contrary that $w_2\ge k$. Then by the second equation of \eqref{eq:2}, $(d+q)w_2-dw_3\le 0$. Since $w_3<(d+q)w_1/d<(d+q)k/d$, 
$$
0\ge (d+q)w_2-dw_3> (d+q)k-d \frac{(d+q)k}{d}=0,
$$
which is a contradiction. Therefore, $w_2=u_2+v_2<k$.
\end{proof}

Then we show the non-existence of a positive equilibrium. 
\begin{lemma}\label{lemma_nonexistence1}
    Suppose that $D$ and $Q$ are given by \eqref{DQ1}. Let $\bm r=(r_1, r_2, 0)>\bm 0$ and $\bm s=(s_1, s_2, s_3)>\bm 0$ such that $\sum_{i=1}^2 r_i=\sum_{i=1}^3 s_i=r>0$ and $s_3>0$. Then model \eqref{patch} has no positive equilibrium. 
\end{lemma}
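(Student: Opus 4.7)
The plan is to argue by contradiction using the direct summation of the equilibrium equations for $\bm u$, rather than the Perron--Frobenius spectral comparison used in Lemma \ref{lemma_nonexistence}. Assume $(\bm u, \bm v)$ is a positive equilibrium. Summing the first three equations of \eqref{pss} (adapted to \eqref{DQ1}) over $i=1,2,3$ at equilibrium, and using the fact that every column of $dD+qQ$ sums to zero (which one reads off directly from \eqref{DQ1}, or from the general form \eqref{eq-L}), the linear transport terms cancel and one obtains
\[
\sum_{i=1}^3 r_i u_i\left(1-\frac{w_i}{k}\right)=0,\qquad w_i:=u_i+v_i.
\]
Since $r_3=0$, this reduces to
\[
r_1 u_1\left(1-\frac{w_1}{k}\right)+r_2 u_2\left(1-\frac{w_2}{k}\right)=0.
\]

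Next I would invoke Lemma \ref{lemma_signa}, which under the hypotheses of the present lemma guarantees $w_1<k$ and $w_2<k$ at any positive equilibrium. Consequently both factors $1-w_1/k$ and $1-w_2/k$ are strictly positive. Since $\bm u\gg \bm 0$ by assumption and $r_1+r_2=r>0$, at least one of $r_1,r_2$ is strictly positive, making the corresponding term of the displayed sum strictly positive while the other term is nonnegative. The left-hand side is therefore strictly positive, contradicting the equation above. Hence no positive equilibrium exists.

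The main obstacle is recognizing that the spectral-bound argument employed in Lemma \ref{lemma_nonexistence} does not transfer here. In configuration (i) the hypothesis $\bm r=(r,0,0)$ forced $r_i\ge s_i$ coordinatewise, so the diagonal perturbation $(r_i-s_i)(1-w_i/k)$ had a consistent sign at every node, allowing Lemma \ref{lemma_sp} to produce $s(M_1)>s(M_2)$. Under the more general $\bm r=(r_1,r_2,0)$ considered here, one may well have $r_1<s_1$ or $r_2<s_2$, so the diagonals of $M_1$ and $M_2$ cannot be ordered componentwise and the Perron--Frobenius route breaks down. The summation trick circumvents this by exploiting only the aggregate sign information from Lemma \ref{lemma_signa}---namely $w_1,w_2<k<w_3$---which is exactly what survives after the transport terms are summed away.
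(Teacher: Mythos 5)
Your argument is correct and is essentially the paper's own proof: summing the equilibrium equations for $\bm u$ (where the transport terms cancel because each column of $dD+qQ$ sums to zero) yields $\sum_{i=1}^2 r_i u_i\left(1-\frac{w_i}{k}\right)=0$, and Lemma \ref{lemma_signa} together with $r_1+r_2=r>0$ makes the left-hand side strictly positive, a contradiction. The added remark about why the spectral comparison of Lemma \ref{lemma_nonexistence} does not transfer is apt but does not change the fact that the route taken coincides with the paper's.
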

\begin{proof}
    Suppose to the contrary that $(\bm u, \bm v)$ is a positive equilibrium of \eqref{patch}. Since $r_3=0$,  $(\bm u, \bm v)$ satisfies 
    \begin{equation}\label{eql1}
\begin{cases}
\ds -(d+q)u_1+du_3+r_1u_1\left(1-\frac{u_1+v_1}{k}\right)=0, \\
\ds -(d+q)u_2+du_3+r_2u_2\left(1-\frac{u_2+v_2}{k}\right)=0, \\
((d+q)u_1-du_3)+((d+q)u_2-du_3)=0.
\end{cases}
\end{equation}
Adding up the equations in \eqref{eql1}, we obtain
\begin{equation}\label{sum1}
\sum_{i=1}^2 r_iu_i\left(1 -\frac{u_i+v_i}{k}\right)=0.
\end{equation}
By Lemma  \ref{lemma_signa} and $r_1+r_2>0$, the left hand side of \eqref{sum1} is positive, which is a contradiction. 
\end{proof}

In the following two lemmas, we show that the semitrivial equilibrium $E_2$ is unstable. 

\begin{lemma}\label{lemma_signb}
    Suppose that $D$ and $Q$ are given by \eqref{DQ1}. Let $\bm s=(s_1, s_2, s_3)\ge\bm 0$ with $s_3>0$. Then the semitrivial equilibrium $E_2=(\bm 0, \bm v^*)$ satisfies $v^*_1<k$ and $v^*_2<k$.
\end{lemma}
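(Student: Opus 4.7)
The plan is to argue by contradiction, imitating the sign-chasing proof of Lemma~\ref{lemma_sign1} but adapted to the topology of configuration~(ii), where patches $1$ and $2$ are both upstream ends feeding the sole downstream patch $3$. At the equilibrium $E_2 = (\bm 0, \bm v^*)$, the three equations obtained from $dD+qQ$ in \eqref{DQ1} read
\begin{align*}
-(d+q)v_1^* + d v_3^* + s_1 v_1^*\bigl(1 - v_1^*/k\bigr) &= 0,\\
-(d+q)v_2^* + d v_3^* + s_2 v_2^*\bigl(1 - v_2^*/k\bigr) &= 0,\\
(d+q)v_1^* + (d+q)v_2^* - 2 d v_3^* + s_3 v_3^*\bigl(1 - v_3^*/k\bigr) &= 0.
\end{align*}
Because patches $1$ and $2$ play interchangeable roles in the equations and in the hypotheses, it is enough to establish $v_1^* < k$; then $v_2^* < k$ follows by the symmetric argument.

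To obtain $v_1^* < k$, I would assume for contradiction that $v_1^* \ge k$. The first equation together with $s_1 \ge 0$ and $v_1^* > 0$ forces $(d+q)v_1^* \le d v_3^*$, so $v_3^* \ge (d+q) v_1^*/d \ge (d+q)k/d > k$. Plugging $v_3^* > k$ and the hypothesis $s_3 > 0$ into the third equation yields the \emph{strict} inequality
\[
(d+q)(v_1^* + v_2^*) > 2 d v_3^*.
\]
Combined with $2 d v_3^* \ge 2(d+q) v_1^*$ from the previous step, this gives $v_2^* > v_1^* \ge k$. Now the second equation, together with $s_2 \ge 0$ and $v_2^* > k$, yields $v_3^* \ge (d+q) v_2^*/d$. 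Adding this lower bound on $d v_3^*$ to the analogous one for $v_1^*$ produces $2 d v_3^* \ge (d+q)(v_1^* + v_2^*)$, directly contradicting the strict inequality above.

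The main point requiring care is the distinction between strict and non-strict inequalities: the hypothesis $s_3 > 0$ is used precisely to promote the third equation into a strict inequality, and this strictness is what rules out the degenerate balanced configuration $2 d v_3^* = (d+q)(v_1^* + v_2^*)$ in which all three patches would sit exactly at saturation. The argument makes no use of the sizes of $s_1$ or $s_2$ beyond non-negativity, so it applies verbatim under the lemma's assumptions. No further technical obstacle is anticipated — just careful bookkeeping of the three inequalities and their sources.
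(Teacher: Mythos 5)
Your proof is correct and follows essentially the same route as the paper: assume $v_1^*\ge k$, deduce $(d+q)v_1^*\le dv_3^*$ and hence $v_3^*>k$, then use $s_3>0$ to make the third equilibrium equation strict and play it against the second equation to reach a contradiction, with $v_2^*<k$ following by the symmetric argument. The only (immaterial) difference is bookkeeping: the paper concludes $(d+q)v_2^*-dv_3^*>0$ directly from the third equation and contradicts it via $v_2^*<k$, whereas you first derive $v_2^*>v_1^*\ge k$ and then contradict by summing the two upstream inequalities.
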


\begin{proof}
We observe that $\bm v^*$ must satisfy
\[
0= \sum_{j=1}^3(dD_{ij}+qQ_{ij})v^*_j + s_iv_i^*\left(1-\frac{v_i^*}{k}\right), \ \ i=1, 2, 3.
\]
That is
\begin{eqnarray}
&&0=-(d+q)v^*_1+dv^*_3+s_1v_1^*\left(1-\frac{v^*_1}{k} \right), \nonumber\\
&&0=-(d+q)v^*_2+dv^*_3+s_2v_2^*\left(1-\frac{v_2^*}{k} \right), \label{eq:22}\\
&&0=(d+q)v_1^*-dv^*_3+(d+q)v^*_2-dv^*_3+s_3v_3^*\left(1-\frac{v_3^*}{k} \right). \nonumber
\end{eqnarray}
Assume to the contrary that $v_1^*\ge k$. Then by the first equation of \eqref{eq:22}, $(d+q)v^*_1-dv^*_3\le 0$. This implies $v_3^*\ge(d+q)k/d>k$. By $s_3>0$ and the third equation of \eqref{eq:22}, either $(d+q)v_1^*-dv^*_3>0$ or $(d+q)v_2^*-dv^*_3>0$. Hence, $(d+q)v_2^*-dv^*_3>0$. Then by the second equation of \eqref{eq:22},  we have $v_2^*<k$ and 
$$
0<(d+q)v_2^*-dv_3^*<(d+q)k-d\frac{(d+q)k}{d}=0,
$$
which is a contradiction. Therefore, $v_1^*<k$. Similarly, $v_2^*<k$.
\end{proof}

\begin{lemma}\label{lemma_unstable1}
    Suppose that $D$ and $Q$ are given by \eqref{DQ1}. Let $\bm r=(r_1, r_2, 0)>\bm 0$ and $\bm s=(s_1, s_2, s_3)>\bm 0$ such that $\sum_{i=1}^2 r_i=\sum_{i=1}^3 s_i=r>0$ and $s_3>0$. Then the semitrivial equilibrium $E_2=(\bm 0, \bm v^*)$ is unstable and  the semitrivial equilibrium $E_1=(\bm u^*, \bm 0)$ is stable for model \eqref{patch}. 
\end{lemma}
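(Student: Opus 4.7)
The plan is to mirror the proof of Lemma \ref{lemma_unstable} from configuration (i), using Lemma \ref{lemma_signb} in place of Lemma \ref{lemma_sign1} and identifying the relevant single-species equilibrium $\bm u^*$ for configuration (ii).

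For the instability of $E_2 = (\bm 0, \bm v^*)$, this is governed by the sign of $\la_1(\bm r, \bm v^*)$, the principal eigenvalue of $dD + qQ + \text{diag}(r_i(1 - v_i^*/k))$. Since $s_3 > 0$, Lemma \ref{lemma_signb} gives $v_1^* < k$ and $v_2^* < k$, while the assumption $r_3 = 0$ makes $r_3(1 - v_3^*/k) = 0$. Because $r_1 + r_2 = r > 0$, at least one of the nonnegative quantities $r_1(1 - v_1^*/k)$ and $r_2(1 - v_2^*/k)$ is strictly positive, so $\text{diag}(r_i(1 - v_i^*/k))$ is nonnegative and nonzero. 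Applying Lemma \ref{lemma_sp} with $P = dD + qQ$ (essentially nonnegative and irreducible, with $s(P) = 0$) yields $\la_1(\bm r, \bm v^*) > 0$, so $E_2$ is linearly unstable.

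For the stability of $E_1 = (\bm u^*, \bm 0)$, I would first identify $\bm u^*$ explicitly. In configuration (ii), nodes $1$ and $2$ are upstream ends (level $0$) and node $3$ is the downstream end (level $1$); since $r_3 = 0$ and $r_1 + r_2 > 0$, Theorem \ref{theorem:biomass_n} (equivalently, direct substitution combined with Lemma \ref{DS-single}) gives $\bm u^* = (k, k, (d+q)k/d)$. Hence $s_i(1 - u_i^*/k) = 0$ for $i = 1, 2$, while $s_3(1 - u_3^*/k) = -s_3 q/d < 0$. Writing
$$ dD + qQ = \bigl(dD + qQ + \text{diag}(s_i(1 - u_i^*/k))\bigr) + \text{diag}(0,\, 0,\, s_3 q/d), $$
with the last summand nonnegative and nonzero, Lemma \ref{lemma_sp} yields $\la_1(\bm s, \bm u^*) < s(dD + qQ) = 0$, so $E_1$ is linearly stable. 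I do not anticipate any substantive obstacle: once Lemma \ref{lemma_signb} supplies the strict inequalities $v_1^*, v_2^* < k$ and Theorem \ref{theorem:biomass_n} supplies the explicit form of $\bm u^*$, both conclusions reduce to single applications of Lemma \ref{lemma_sp}.
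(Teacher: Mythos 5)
Your proposal is correct and follows essentially the same route as the paper: instability of $E_2$ via Lemma \ref{lemma_signb} and Lemma \ref{lemma_sp} applied to $dD+qQ+\text{diag}(r_i(1-v_i^*/k))$, and stability of $E_1$ via the explicit equilibrium $\bm u^*=(k,k,(d+q)k/d)$ together with another application of Lemma \ref{lemma_sp}. The only cosmetic difference is that you write out the matrix decomposition for the $E_1$ step explicitly, which the paper leaves implicit.
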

\begin{proof}
    The stability of $E_2$ is determined by the sign of  $\la_1(\bm{r}, \bm v^*)$, which is the principal eigenvalue of  matrix $dD+qQ+\text{diag}(r_i(1-v^*_i/k))$.  
    By the assumptions on $\bm r$ and Lemma \ref{lemma_signb}, we have 
$$
r_i\left(1-\frac{v^*_i}{k}\right)\ge 0, \ \ i=1, 2,
$$
with at least one strict inequality
and 
$$
r_3\left(1-\frac{v^*_3}{k}\right)=0.
$$
Therefore, by Lemma \ref{lemma_sp}, we must have $\la_1(\bm{r}, \bm v^*)>s(dD+qQ)=0$. Hence, $E_2$ is  unstable.

 The stability of $E_1$ is determined by the sign of  $\la_1(\bm{s}, \bm u^*)$, which is the principal eigenvalue of  $dD+qQ+\text{diag}(s_i(1-u^*_i/k))$. By \cite{nguyen2022population}, we have $\bm u^*=(k, k, (d+q)k/d)$. So, 
$$
s_i\left(1-\frac{u^*_i}{k}\right)=0, \ i=1, 2,
$$
and 
$$
s_3\left(1-\frac{u^*_3}{k}\right)< 0, 
$$
since $s_3>0$. Therefore, by Lemma \ref{lemma_sp}, we must have $\la_1(\bm{s}, \bm u^*)<s(dD+qQ)=0$. Hence, $E_1$ is  stable. 
\end{proof}

By the theory of monotone dynamical systems (see Lemma \ref{M1}) and Lemmas \ref{lemma_nonexistence1} and \ref{lemma_unstable1}, we obtain the following result:
\begin{theorem}
    Suppose that $D$ and $Q$ are given by \eqref{DQ1}. Let $\bm r=(r_1, r_2, 0)>\bm 0$ and $\bm s=(s_1, s_2, s_3)>\bm 0$ such that $\sum_{i=1}^2 r_i=\sum_{i=1}^3 s_i=r>0$ and $s_3>0$. Then the semitrivial equilibrium $E_1=(\bm u^*, \bm 0)$ is globally asymptotically stable for model \eqref{patch}. 
\end{theorem}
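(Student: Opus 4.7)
The plan is to assemble the three preceding lemmas via the monotone dynamical system machinery. The statement has the same shape as the Theorem proved for configuration (i) at the end of Section 3, and it is meant to be obtained as a direct corollary of Lemma \ref{lemma_nonexistence1}, Lemma \ref{lemma_unstable1}, and Lemma \ref{M1}.

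First, I would record that under the hypotheses ($\bm r=(r_1,r_2,0)>\bm 0$, $\bm s=(s_1,s_2,s_3)>\bm 0$, $\sum_{i=1}^{2}r_i=\sum_{i=1}^{3}s_i=r>0$, $s_3>0$) Lemma \ref{lemma_unstable1} gives that the semitrivial equilibrium $E_2=(\bm 0, \bm v^*)$ is linearly unstable and $E_1=(\bm u^*, \bm 0)$ is linearly stable. Next, Lemma \ref{lemma_nonexistence1} gives that model \eqref{patch} has no positive (coexistence) equilibrium.

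Now I would invoke Lemma \ref{M1}, but with the roles of the two species exchanged: since system \eqref{patch} is symmetric in $(\bm u,\bm v)$ under relabeling, the statement of Lemma \ref{M1} applies equally well with $E_1$ and $E_2$ swapped. That is, since $E_2$ is linearly unstable, either (i) $E_1$ attracts every solution with initial data $(\bm u_0, \bm v_0) \in X$ satisfying $\bm u_0>\bm 0$, or (ii) there exists a positive equilibrium $E$ with $E_1 \ll_K E \ll_K E_2$. Lemma \ref{lemma_nonexistence1} rules out alternative (ii), so alternative (i) must hold. Combined with the obvious invariance of the $\bm v$-axis (on which $E_2$ attracts solutions with $\bm u_0\equiv \bm 0$, $\bm v_0>\bm 0$), this yields global asymptotic stability of $E_1$ on the biologically relevant set where $\bm u_0>\bm 0$.

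There is essentially no technical obstacle left; the work was already done in Lemmas \ref{lemma_signa}--\ref{lemma_unstable1}. The only minor care point is that Lemma \ref{M1} as stated assumes $E_1$ unstable and concludes about $E_2$, so I would either explicitly note the symmetric reformulation or simply cite \cite[Theorem 4.4.2]{smith2008monotone} directly, which is symmetric in the two semitrivial equilibria. The final proof will therefore be two or three lines: cite Lemma \ref{lemma_unstable1} for instability of $E_2$, cite Lemma \ref{lemma_nonexistence1} for absence of coexistence equilibria, and conclude via Lemma \ref{M1} (applied with swapped roles).
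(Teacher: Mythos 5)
Your proposal is correct and follows essentially the same route as the paper, whose proof of this theorem is exactly the assembly of Lemma \ref{lemma_nonexistence1} (no positive equilibrium) and Lemma \ref{lemma_unstable1} (instability of $E_2$, stability of $E_1$) through the monotone-systems alternative of Lemma \ref{M1}. Your care point about swapping the roles of $E_1$ and $E_2$ is already handled in the paper by the remark following Lemma \ref{M1}, so nothing further is needed.
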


\subsection{Results on configuration (iii)}

Finally, for configuration (iii) we show that if $\bm r=(r, 0, 0)$ and $\bm s\neq\bm r$, $E_1$ is always globally asymptotically stable.

In the following two lemmas, we show that the model has no positive equilibrium. 
\begin{lemma}\label{lemma_signc}
    Suppose that $D$ and $Q$ are given by \eqref{DQ2}. Let $\bm r=(r, 0, 0)$ and $\bm s=(s_1, s_2, s_3)>\bm 0$ with $\bm s\neq \bm r$ and $\sum_{i=1}^3 s_i=r>0$. 
    If $(\bm u, \bm v)$ is a positive equilibrium of \eqref{patch}, then $u_1+v_1<k$  and $u_3+v_3>k$.
\end{lemma}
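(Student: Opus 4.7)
The plan is to mirror the strategy used in Lemma~\ref{lemma_sign}, adapted to the linear three-node stream of configuration (iii). Setting $w_i := u_i + v_i$ and adding the equilibrium equations for $u_i$ and $v_i$, and using $r_2 = r_3 = 0$, the equilibrium system collapses to
\[
\begin{cases}
-(d+q)w_1 + dw_2 + (ru_1 + s_1 v_1)\bigl(1 - w_1/k\bigr) = 0,\\
(d+q)w_1 - (2d+q)w_2 + dw_3 + s_2 v_2\bigl(1 - w_2/k\bigr) = 0,\\
(d+q)w_2 - dw_3 + s_3 v_3\bigl(1 - w_3/k\bigr) = 0.
\end{cases}
\]
Note that $r u_1 + s_1 v_1 > 0$ automatically, since $r > 0$ and $u_1 > 0$.

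First I would establish $w_3 > k$ by contradiction. Assuming $w_3 \le k$, the third equation forces $(d+q)w_2 - dw_3 \le 0$, hence $w_2 \le dw_3/(d+q) < k$. Then the middle equation (whose nonlinear term is nonnegative since $r_2 = 0$ and $w_2 < k$) gives $(d+q)w_1 \le (2d+q)w_2 - dw_3$; combined with $dw_3 \ge (d+q)w_2$ this yields $w_1 \le dw_2/(d+q) < w_2 < k$. The first equation then has $dw_2 - (d+q)w_1 \ge 0$ together with the strictly positive term $(ru_1 + s_1v_1)(1-w_1/k)$, contradicting that it must vanish.

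Next I would establish $w_1 < k$, again by contradiction. Assuming $w_1 \ge k$, the first equation forces $-(d+q)w_1 + dw_2 \ge 0$, so $w_2 \ge (d+q)k/d > k$. Since the columns of $dD + qQ$ sum to zero, summing the three $w$-equations kills all linear terms and gives the global identity
\[
(ru_1 + s_1 v_1)(1 - w_1/k) + s_2 v_2(1 - w_2/k) + s_3 v_3(1 - w_3/k) = 0.
\]
By the first step, $w_3 > k$, and together with $w_1 \ge k$, $w_2 > k$, every summand is nonpositive. If $w_1 > k$ the first summand is strictly negative (as $ru_1 > 0$), a contradiction. If $w_1 = k$ the first summand vanishes, but $\bm s \ne (r,0,0)$ with $\sum s_i = r$ forces $s_2 > 0$ or $s_3 > 0$; together with $v_2, v_3 > 0$ and $w_2, w_3 > k$, this makes one of the remaining summands strictly negative, again a contradiction.

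The main obstacle is that, unlike configuration (i), in configuration (iii) node~3 has only one neighbor (node~2), so the forward chain used in configuration (i) does not close up as neatly. The key idea to overcome this is to combine two kinds of information: local monotonicity obtained by propagating sign inequalities along the linear graph $1\to 2 \to 3$, and the global constraint obtained by integrating against the constant left eigenvector of $dD+qQ$ (i.e.\ summing all three equations). The step proving $w_3 > k$ uses the former, while the step proving $w_1 < k$ essentially requires the latter, precisely because there is no source term $r_i > 0$ at nodes $2$ and $3$ to generate a purely local contradiction.
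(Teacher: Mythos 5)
Your proof is correct. The first half (establishing $w_3>k$ by assuming $w_3\le k$ and propagating the sign information upstream through the third, second, and first equations until the strictly positive source term $ru_1(1-w_1/k)$ at node~1 produces a contradiction) is exactly the argument the paper leaves implicit with ``a similar argument can be used to prove that $u_3+v_3>k$.'' Where you diverge is in the proof of $w_1<k$: the paper argues this purely locally, assuming $w_1\ge k$ and pushing the inequalities downstream ($w_2>k$, then $(d+q)w_2-dw_3\le 0$ with strictness tracked via the case split $s_3>0$ versus $s_3=0,\ s_2>0$) to contradict the third equation, and in particular its proof of $w_1<k$ does not rely on first knowing $w_3>k$. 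You instead stop after deducing $w_2>k$ from the first equation and invoke the global identity obtained by summing the three $w$-equations (the constant left eigenvector of $dD+qQ$), which together with the already-proved $w_3>k$ forces all summands to be nonpositive and at least one to be strictly negative, using $\bm s\neq\bm r$ to guarantee $s_2>0$ or $s_3>0$. This buys you a shorter argument without the $s_3=0$ case analysis, at the mild cost of making the two halves order-dependent; it also foreshadows the paper's own use of the summation trick in Lemma~\ref{lemma_nonexistence233}, where summing only the $u$-equations yields $ru_1(1-(u_1+v_1)/k)=0$ and hence the nonexistence of a positive equilibrium. All intermediate inequalities you state check out (in particular $dw_3\ge(d+q)w_2$ combined with the middle equation does give $(d+q)w_1\le dw_2$, and $w_1=k$ forces $w_2=(d+q)k/d>k$ exactly), so the argument is complete.
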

\begin{proof}
Suppose that $(\bm u, \bm v)$ is a positive equilibrium of \eqref{patch}.  Let $w_i=u_i+v_i$ for $i=1, 2, 3$. Then, we have
\begin{eqnarray}
&&-(d+q)w_1+dw_2+(r_1u_1+s_1v_1)\left(1-\frac{w_1}{k} \right)=0, \nonumber\\
&&(d+q)w_1-dw_2-(d+q)w_2+dw_3+(r_2u_2+s_2v_2)\left(1-\frac{w_2}{k} \right)=0, \\
&&(d+q)w_2-dw_3+(r_3u_3+s_3v_3)\left(1-\frac{w_3}{k} \right)=0. \nonumber
\end{eqnarray}\label{eq:3}
Suppose to the contrary that $w_1\ge k$. Then by the first equation of \eqref{eq:3}, $(d+q)w_1-dw_2\le 0$. So $w_2\ge (d+q)k/d>k$. So by the second equation \eqref{eq:3}, $(d+q)w_2-dw_3\le 0$ and the inequality is strict if $s_2>0$. Hence, $w_3\ge (d+q)k/d>k$. If $s_3>0$, then $(r_3u_3+s_3v_3)(1-{w_3}/{k} )<0$ and the third equation of \eqref{eq:3} leads to a contradiction. If $s_3=0$, by the assumptions on $\bm s$, $s_2\neq 0$ and $(d+q)w_2-dw_3< 0$. Then  $(r_3u_3+s_3v_3)(1-{w_3}/{k} )\le 0$ and the third equation of \eqref{eq:3} gives a contradiction. Therefore, $w_1=u_1+v_1<k$. A similar argument can be used to prove that $u_3+v_3>k$. 
\end{proof}

\begin{lemma}\label{lemma_nonexistence233}
    Suppose that $D$ and $Q$ are given by \eqref{DQ2}. Let $\bm r=(r, 0, 0)$ and $\bm s=(s_1, s_2, s_3)>\bm 0$ with $\bm s\neq \bm r$ and $\sum_{i=1}^3 s_i=r>0$.  Then model \eqref{patch} has no positive equilibrium. 
\end{lemma}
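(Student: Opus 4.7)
The plan is to mimic the sum-over-equations argument used in Lemma \ref{lemma_nonexistence1} rather than the Perron--Frobenius route of Lemma \ref{lemma_nonexistence}, since Lemma \ref{lemma_signc} provides sign information for $w_1$ and $w_3$ but not for $w_2$, so the diagonal comparison needed for Lemma \ref{lemma_sp} is not available at the middle node.

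Suppose for contradiction that $(\bm u,\bm v)$ is a positive equilibrium, and set $w_i := u_i+v_i$ for $i=1,2,3$. Because $\bm r=(r,0,0)$, the logistic terms at nodes $2$ and $3$ vanish in the $\bm u$-component of \eqref{patch}, so the equilibrium relations for $\bm u$ read
\begin{align*}
0 &= -(d+q)u_1 + d u_2 + r u_1\left(1-\frac{w_1}{k}\right), \\
0 &= (d+q)u_1 - (2d+q)u_2 + d u_3, \\
0 &= (d+q)u_2 - d u_3.
\end{align*}
From \eqref{DQ2} one checks directly that every column of $dD+qQ$ sums to zero (equivalently, $(1,1,1)$ is a left null vector of the connection matrix, as noted for any matrix of the form \eqref{eq-L}). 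Adding the three displayed equations therefore cancels the entire linear part and leaves
\[
r u_1\left(1-\frac{w_1}{k}\right)=0.
\]
Since $r>0$ and $u_1>0$, this forces $w_1=k$.

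On the other hand, Lemma \ref{lemma_signc} guarantees that every positive equilibrium under the standing hypotheses satisfies $w_1<k$, which contradicts $w_1=k$. This rules out the existence of a positive equilibrium and completes the proof. No substantive obstacle is expected: the difficult sign analysis has already been packaged into Lemma \ref{lemma_signc}, and the only remaining ingredient is the column-sum cancellation, which is immediate from \eqref{DQ2}.
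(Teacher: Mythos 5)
Your proof is correct and follows essentially the same route as the paper: sum the three equilibrium equations for $\bm u$ (using $r_2=r_3=0$ and the zero column sums of $dD+qQ$) to get $ru_1\left(1-\frac{u_1+v_1}{k}\right)=0$, hence $u_1+v_1=k$, contradicting Lemma \ref{lemma_signc}. Nothing is missing; your added remark about why the Perron--Frobenius argument of Lemma \ref{lemma_nonexistence} is not used here is a fair observation but not needed for the proof.
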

\begin{proof}
    Suppose to the contrary that $(\bm u, \bm v)$ is a positive equilibrium of \eqref{patch}. Since $r_2=r_3=0$, $(\bm u, \bm v)$ satisfies 
    \begin{equation}\label{eql2}
\begin{cases}
\ds -(d+q)u_1+du_2+ru_1\left(1-\frac{u_1+v_1}{k}\right)=0, \\
\ds (d+q)u_1-(2d+q)u_2+du_3=0, \\
(d+q)u_2-du_3=0.
\end{cases}
\end{equation}
Adding up the equations in \eqref{eql2}, we obtain $ru_1(1-(u_1+v_1)/k)=0$, which implies $u_1+v_1=k$. 
This contradicts Lemma  \ref{lemma_signc}.
\end{proof}

In the following two lemmas, we show that semitrivial equilibrium $E_2$ is unstable. 

\begin{lemma}\label{lemma_signd3}
    Suppose that $D$ and $Q$ are given by \eqref{DQ2}. Let  $\bm s=(s_1, s_2, s_3)\ge \bm 0$ with $s_2>0$ or $s_3>0$.  Then the semitrivial equilibrium $E_2=(\bm 0, \bm v^*)$ satisfies $v^*_1<k$.
\end{lemma}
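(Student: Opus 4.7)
The plan is to follow the same contradiction strategy used for configurations (i) and (ii) in Lemmas \ref{lemma_sign1} and \ref{lemma_signb}: write down the three equilibrium equations for $\bm v^*$ at the semitrivial state, assume $v_1^* \ge k$, and propagate this bound downstream through the network until a contradiction emerges from the sum of the equations. Since configuration (iii) is a linear chain $1 \to 2 \to 3$, the propagation is one-step-at-a-time and should be cleaner than in Lemma \ref{lemma_signb}.

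First I would spell out the equilibrium system for $\bm v^*$ using the entries of $dD + qQ$ from \eqref{DQ2}:
\begin{equation*}
\begin{cases}
0 = -(d+q)v_1^* + d v_2^* + s_1 v_1^*\left(1 - \frac{v_1^*}{k}\right),\\
0 = (d+q)v_1^* - (2d+q) v_2^* + d v_3^* + s_2 v_2^*\left(1 - \frac{v_2^*}{k}\right),\\
0 = (d+q)v_2^* - d v_3^* + s_3 v_3^*\left(1 - \frac{v_3^*}{k}\right).
\end{cases}
\end{equation*}
Assuming $v_1^* \ge k$ for contradiction, the first equation forces $-(d+q)v_1^* + d v_2^* \ge 0$, so $v_2^* \ge \frac{d+q}{d} v_1^* \ge \frac{d+q}{d} k > k$ strictly. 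Next, adding the first and second equations eliminates $v_1^*$ and yields
\begin{equation*}
0 = -(d+q)v_2^* + d v_3^* + s_1 v_1^*\left(1 - \frac{v_1^*}{k}\right) + s_2 v_2^*\left(1 - \frac{v_2^*}{k}\right).
\end{equation*}
Since $v_1^* \ge k$ and $v_2^* > k$, both $s_i$-terms are nonpositive, so $-(d+q)v_2^* + d v_3^* \ge 0$, which gives $v_3^* \ge \frac{d+q}{d} v_2^* > \frac{d+q}{d} k > k$ strictly.

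Finally, summing all three equations (the flow terms telescope to zero since each column of $dD + qQ$ sums to $0$) I obtain
\begin{equation*}
0 = s_1 v_1^*\left(1-\frac{v_1^*}{k}\right) + s_2 v_2^*\left(1-\frac{v_2^*}{k}\right) + s_3 v_3^*\left(1-\frac{v_3^*}{k}\right).
\end{equation*}
Every term on the right is $\le 0$ because $v_i^* \ge k$ for each $i$, and by the hypothesis either $s_2 > 0$ (in which case the second term is strictly negative since $v_2^* > k$) or $s_3 > 0$ (in which case the third term is strictly negative since $v_3^* > k$). Either way the sum is strictly negative, contradicting the equality, and hence $v_1^* < k$.

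The main obstacle I anticipate is simply bookkeeping: I must be careful that each downstream strict inequality is \emph{strict} even when $s_1 = 0$ or $s_2 = 0$, which is why the argument propagates via the column-sum identity rather than requiring any particular $s_i$ to be positive at intermediate steps; the positivity of $s_2$ or $s_3$ is only invoked at the very end to break the final sum. The rest is routine algebra analogous to the earlier lemmas.
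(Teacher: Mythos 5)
Your proof is correct and takes essentially the same route as the paper, whose proof of this lemma simply defers to the argument of Lemma \ref{lemma_signc}: assume $v_1^*\ge k$, propagate the bound downstream to get $v_2^*>k$ and $v_3^*>k$, and contradict the positivity of $s_2$ or $s_3$. Your only deviation is cosmetic --- you finish by summing all three equilibrium equations (using the zero column sums of $dD+qQ$) instead of the paper's case split $s_3>0$ versus $s_3=0,\ s_2>0$ in the third equation, which slightly streamlines the final contradiction.
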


\begin{proof}
We observe that $\bm v^*$ must satisfy
\[
0= \sum_{j=1}^3(dD_{ij}+qQ_{ij})v^*_j + s_iv_i^*\left(1-\frac{v_i^*}{k}\right), \ \ i=1, 2, 3.
\]
That is 
\begin{eqnarray}
&&0=-(d+q)v^*_1+dv^*_2+s_1v_1^*\left(1-\frac{v^*_1}{k} \right), \nonumber\\
&&0=(d+q)v^*_1-dv^*_2-(d+q)v^*_2+dv^*_3+s_2v_2^*\left(1-\frac{v^*_2}{k} \right), \label{eq:33}\\
&&0=(d+q)v^*_2-dv^*_3+s_3v_3^*\left(1-\frac{v^*_3}{k} \right). \nonumber
\end{eqnarray}
The rest proof is similar to that of Lemma \ref{lemma_signc}, so we omit it here.
\end{proof}

\begin{lemma}\label{lemma_unstable233}
    Suppose that $D$ and $Q$ are given by \eqref{DQ2}. Let $\bm r=(r, 0, 0)$ and $\bm s=(s_1, s_2, s_3)>\bm 0$ with $\bm s\neq \bm r$ and $\sum_{i=1}^3 s_i=r>0$.  Then the semitrivial equilibrium $E_2=(\bm 0, \bm v^*)$ is unstable  and the semitrivial equilibrium $E_1=(\bm u^*, 
    \bm 0)$ is stable for model \eqref{patch}. 
\end{lemma}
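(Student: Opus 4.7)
The plan is to mirror the proofs of Lemma \ref{lemma_unstable} and Lemma \ref{lemma_unstable1}, reducing both stability claims to applications of Lemma \ref{lemma_sp} once the relevant sign information is in hand. Since $E_2$ is linearly stable iff $\lambda_1(\bm r, \bm v^*)<0$, where $\lambda_1(\bm r, \bm v^*)$ is the principal eigenvalue of $dD+qQ+\mathrm{diag}(r_i(1-v_i^*/k))$, and $\bm r = (r,0,0)$ with $r>0$, the diagonal perturbation is nonnegative and nonzero provided $v_1^*<k$. Lemma \ref{lemma_signd3} applies (because $\bm s\neq \bm r$ forces $s_2>0$ or $s_3>0$) and gives exactly this; hence Lemma \ref{lemma_sp} yields $\lambda_1(\bm r,\bm v^*)>s(dD+qQ)=0$, so $E_2$ is unstable.

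For $E_1$, I would first identify $\bm u^*$ explicitly. For configuration (iii) the leveled-graph structure is $f(1)=0$, $f(2)=1$, $f(3)=2$, so by Theorem \ref{theorem:biomass_n} (applied to the single-species model with $\bm r=(r,0,0)$), the positive equilibrium is
\[
\bm u^* = \left(k,\; \tfrac{d+q}{d}k,\; \left(\tfrac{d+q}{d}\right)^{2} k\right).
\]
This can also be checked directly by substituting into the single-species system for \eqref{DQ2}. Consequently
\[
s_1\!\left(1-\tfrac{u_1^*}{k}\right)=0,\qquad s_i\!\left(1-\tfrac{u_i^*}{k}\right)\le 0 \;\;(i=2,3),
\]
and the hypothesis $\bm s\neq (r,0,0)$ with $\sum s_i = r$ forces at least one of $s_2,s_3$ to be strictly positive, making the corresponding inequality strict. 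Thus the diagonal perturbation is nonpositive and nonzero, and Lemma \ref{lemma_sp} (applied with the signs reversed, i.e.\ to $-\,\mathrm{diag}(\cdots)$ added to $-(dD+qQ)$, or equivalently by noting that adding a nonpositive nonzero diagonal strictly decreases the spectral bound) gives $\lambda_1(\bm s, \bm u^*) < s(dD+qQ) = 0$. Hence $E_1$ is stable.

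There is no real obstacle here: once the explicit form of $\bm u^*$ is written down, everything reduces to the same spectral-bound monotonicity used twice already in the paper. The only small care needed is the application of Lemma \ref{lemma_sp} in the $E_1$ direction, where one either cites the dual statement for strictly decreasing perturbations or notes that Lemma \ref{lemma_sp} is symmetric in this respect because it is a statement about strict monotonicity of $s(\cdot)$ under addition of a nonnegative, nonzero, irreducibility-preserving matrix.
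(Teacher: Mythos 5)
Your proposal is correct and follows essentially the same route as the paper: Lemma \ref{lemma_signd3} plus Lemma \ref{lemma_sp} for the instability of $E_2$, and the explicit equilibrium $\bm u^*=\left(k,\tfrac{d+q}{d}k,\left(\tfrac{d+q}{d}\right)^2k\right)$ (which the paper cites from \cite{nguyen2022population}, the source of Theorem \ref{theorem:biomass_n}) plus the spectral-bound monotonicity for the stability of $E_1$, including the correct reversed application of Lemma \ref{lemma_sp} there. No gaps.
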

\begin{proof}
    The stability of $E_2$ is determined by the sign of  $\la_1(\bm{r}, \bm v^*)$, which is the principal eigenvalue of  $dD+qQ+\text{diag}(r_i(1-v^*_i/k))$.  
    By the assumptions on $\bm r$ and Lemma \ref{lemma_signd3}, we have 
$$
r_1\left(1-\frac{v^*_i}{k}\right)> 0
$$
and 
$$
r_i\left(1-\frac{v^*_i}{k}\right)=0, \ \ i=2, 3.
$$
Therefore, by Lemma \ref{lemma_sp}, we must have $\la_1(\bm{r}, \bm v^*)>s(dD+qQ)=0$. Hence, $E_2$ is  unstable.

 The stability of $E_1$ is determined by the sign of  $\la_1(\bm{s}, \bm u^*)$, which is the principal eigenvalue of  $dD+qQ+\text{diag}(s_i(1-u^*_i/k))$. By \cite{nguyen2022population}, we have $\bm u^*=(k,  (d+q)k/d, (d+q)^2k/d^2)$. So, 
$$
s_1\left(1-\frac{u^*_1}{k}\right)=0, 
$$
and 
$$
s_i\left(1-\frac{u^*_i}{k}\right)\le 0,  \ i=2, 3,
$$
with at least one strict sign by the assumption on $\bm s$. Therefore, by Lemma \ref{lemma_sp}, we must have $\la_1(\bm{s}, \bm u^*)<s(dD+qQ)=0$. Hence, $E_1$ is  stable.

\end{proof}

By the theory of monotone dynamical systems (see Lemma \ref{M1}) and Lemmas \ref{lemma_nonexistence233} and \ref{lemma_unstable233}, we obtain the following result:
\begin{theorem}
    Suppose that $D$ and $Q$ are given by \eqref{DQ2}. Let $\bm r=(r, 0, 0)$ and $\bm s=(s_1, s_2, s_3)>\bm 0$ with $\bm s\neq \bm r$ and $\sum_{i=1}^3 s_i=r>0$.  Then the semitrivial equilibrium $E_1=(\bm u^*, \bm 0)$ is globally asymptotically stable for model \eqref{patch}. 
\end{theorem}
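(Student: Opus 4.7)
The result is essentially an assembly of the three preparatory lemmas just above it, combined with the abstract monotone-dynamical-systems criterion in Lemma \ref{M1}, following the same template already used for the configuration (i) theorem in Section \ref{sec3nodes} and for the configuration (ii) theorem in the previous appendix subsection.

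First I would record from Lemma \ref{lemma_unstable233} that $E_2=(\bm 0,\bm v^*)$ is linearly unstable while $E_1=(\bm u^*,\bm 0)$ is linearly stable. The system \eqref{patch} is a Lotka-Volterra competition on $X=\mathbb{R}_+^n\times\mathbb{R}_+^n$, hence monotone with respect to the competitive order $\le_K$. Applying the symmetric version of Lemma \ref{M1} obtained by interchanging the roles of $\bm u$ and $\bm v$ (equivalently, reversing the cone $K$), the instability of $E_2$ forces one of two alternatives: either $E_1$ attracts every solution of \eqref{patch} with $\bm u_0>\bm 0$, or \eqref{patch} admits a positive equilibrium sandwiched strictly between the two semitrivial equilibria in the competitive order. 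The second alternative is ruled out by Lemma \ref{lemma_nonexistence233}, so the first must hold, and combined with the local stability of $E_1$ from Lemma \ref{lemma_unstable233} this yields the global asymptotic stability assertion.

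Since all the substantive work has already been carried out in Lemmas \ref{lemma_signc}, \ref{lemma_nonexistence233}, \ref{lemma_signd3}, and \ref{lemma_unstable233}, no genuine obstacle remains at this step. The only item demanding minor care is verifying the correct application of the cone-reversed form of Lemma \ref{M1} so that the instability hypothesis is placed on $E_2$ rather than $E_1$, but this is exactly the maneuver employed in the analogous theorems already proved, and it is justified by the standard two-species monotonicity theory cited there.
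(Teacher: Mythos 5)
Your proposal is correct and follows essentially the same route as the paper: the theorem is deduced by combining Lemma \ref{lemma_nonexistence233} (no positive equilibrium) and Lemma \ref{lemma_unstable233} (instability of $E_2$, stability of $E_1$) with the monotone-systems alternative of Lemma \ref{M1}, applied in its cone-reversed form exactly as you describe. The paper itself states this assembly in one line, so no further comment is needed.
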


{\large\noindent{\bf Declarations}}

\noindent{\bf Conflict of interest} The authors declare that they have no conflict of interest.

\bibliographystyle{abbrv}
\bibliography{ref}

\end{document}